\documentclass[12pt]{article}%
\usepackage[letterpaper, top=1in, bottom=1in, left=1in, right=1in]{geometry}
\usepackage{amsfonts, amsmath, amsthm, amssymb}
\usepackage{enumitem, graphicx, stmaryrd, color, bbm}
\usepackage{url, algorithm}
\usepackage{float}
\floatstyle{ruled}
\usepackage[T1]{fontenc}
\usepackage{footmisc}

\usepackage[pagebackref=true,colorlinks]{hyperref}
	\hypersetup{linkcolor=[rgb]{.7,0,.7}}
	\hypersetup{citecolor=[rgb]{.5,0,.5}}
	\hypersetup{urlcolor=[rgb]{.7,0,.7}}

\newtheorem{theorem}{Theorem}[section]

\newtheorem{definition}[theorem]{Definition}

\newtheorem{corollary}[theorem]{Corollary}
\newtheorem{remark}[theorem]{Remark}
\newtheorem{proposition}[theorem]{Proposition}

\newtheorem{question}[theorem]{Question}
\newtheorem{fact}[theorem]{Fact}

\renewcommand{\implies}{\Longrightarrow}		
\newcommand{\abs}[1]{\left|#1\right|}		
\newcommand{\N}{\mathbb{N}} 			  		
\newcommand{\F}{\mathbb{F}}					

\newcommand{\mc}{\mathcal}

\newcommand{\set}[1]{\left\{ #1 \right\}}   
\newcommand{\brac}[1]{\left( #1 \right)}    
\newcommand{\sqbrac}[1]{\left[ #1 \right]}  


\newcommand{\norm}[1]{\left\vert #1 \right\vert}			


\newcommand{\val}{\textnormal{val}} 	 			
\newcommand{\up}[1]{{\left(#1\right)}}


\begin{document}

\title{Multiplayer Parallel Repetition Is the Same as High-Dimensional Extremal Combinatorics}
\author{Kunal Mittal\thanks{New York University. E-mail: \href{kunal.mittal@nyu.edu}{\texttt{kunal.mittal@nyu.edu}.} Research supported by a Simons Investigator Award, and NSF Award CCF-2007462. This work also appeared in the author's PhD thesis~\cite{Mit25}.}}
\date{}
\maketitle

\begin{abstract}
	We show equivalences between several high-dimensional problems in extremal combinatorics and parallel repetition of multiplayer (multiprover) games over large answer alphabets.
	This extends the forbidden-subgraph technique, previously studied by Verbitsky (Theoretical Computer Science 1996), Feige and Verbitsy (Combinatorica 2002), and H{\k a}z{\l}a, Holenstein and Rao (2016), to all $k$-player games, and establishes new connections to problems in combinatorics.
	We believe that these connections may help future progress in both fields.
\end{abstract}


\newpage

\section{Introduction}

We study connections between two seemingly unrelated topics: parallel repetition of multiplayer games, and high dimensional problems in extremal and additive combinatorics.

\subsection{Parallel Repetition of Multiplayer Games}

In a $k$-player game $\mc G$, a referee samples a tuple $(x^\up{1},\dots,x^\up{k})$ of questions from some joint distribution $\mu$.
Then, for each $j\in [k]$, question $x^\up{j}$ is sent to the $j$\textsuperscript{th} player, to which they respond back with an answer $a^\up{j}$ (that depends only on $x^\up{j}$).
The referee then declares whether the players win or lose based on the evaluation of a predicate $V(x^\up{1},\dots,x^\up{k})$.
The value of the game $\mc G$, denoted $\val(\mc G)$, is the maximum winning probability (with respect to the distribution $\mu$) over all possible strategies of the $k$ players.

Given a game $\mc G$ with $\val(\mc G)<1$, it is natural to study how the value of the game behaves under \emph{parallel repetition}~\cite{FRS94}. 
The $n$-fold {parallel repetition of the game $\mc G$, denoted $\mc G^n$, roughly speaking, is a $k$-player in which the players play $n$ copies of $\mc G$ in parallel, and are required to win all of them.
Formally, this game proceeds as follows:
for each $i\in [n]$, the referee samples questions $(x_i^\up{1},\dots, x_i^\up{k})\sim \mu$ independently.
Then, for each $j\in [k]$, the questions $(x_1^\up{j},\dots,x_n^\up{j})$ are sent to the $j$\textsuperscript{th} player, to which they respond back answers $(a_1^\up{j},\dots,a_n^\up{j})$.
The referee declares that the players win if and only if $V(x_i^\up{1},\dots,x_i^\up{k},a_i^\up{1},\dots,a_i^\up{k}) = 1$ for each $i\in [n]$.

Observe that $\val(\mc G^n)\geq \val(\mc G)^n$, as this is the value obtained when the players play optimal strategies independently in each coordinate.
Intuitively, one might expect this inequality to be tight, since the questions received by the players for different copies of the game are independent; however, this is not true.
There are games with $\val(\mc G^n) \gg \val(\mc G)^n$~\cite{For89, Fei91, FV02, Raz11}.
The key reason for this is that in the game $\mc G^n$, while the referee treats the $n$ copies as independent, the players may not; that is, each answer of a player may depend on all of the questions they receive.

The special case of 2-player games is well-understood, where Raz proved that for any game $\mc G$ with $\val(\mc G)<1$, the value $\val(\mc G^n)$ decays exponentially in $n$~\cite{Raz98}.
Subsequent works have improved the constants in this bound and even led to bounds depending on the initial game value $\val(\mc G)$~\cite{Hol09, BRRRS09, Rao11, RR12, DS14, BG15}.
These and related works have led to many applications in various areas of mathematics: in the theory of interactive proofs~\cite{BOGKW88}, PCPs and hardness of approximation~\cite{FGLSS96, ABSS97,ALMSS98,AS98, BGS98, Fei98, Has01,Kho02a, Kho02b, GHS02, DGKR05, DRS05}, geometry of foams~\cite{FKO07, KORW08, AK09, BM21}, quantum information~\cite{CHTW04}, and communication complexity~\cite{PRW97,BBCR13, BRWY13}.
The reader is referred to this survey~\cite{Raz10} for more details.

In the case of $k$-player games, for $k\geq 3$, the only known general bound says that for any game $\mc G$ with $\val(\mc G)<1$, it holds that $\val(\mc G^n) \leq \frac{1}{\alpha(n)}$, where $\alpha(n)$ is a slowing growing inverse-Ackerman function~\cite{Ver96}.
The weak bounds here follow from the black-box use of the density Hales-Jewett theorem~\cite{FK91, Pol12}.
Recent work has made progress on special classes of multiprover games~\cite{DHVY17, HR20, GHMRZ21, GHMRZ22, GMRZ22, BKM23, BBKLM24, BBKLM25}, however, the general question remains wide open.

Multiplayer parallel repetition has several applications.
It is known that a strong parallel repetition theorem for a certain class of multiplayer games implies super-linear lower bounds for non-uniform Turing machines~\cite{MR21}.
Additionally, it is believed that progress in proving multiplayer parallel repetition bounds may lead to an improved understanding of multiparty communication complexity in the number-on-forehead (NOF) model, a problem which is intimately connected to circuit lower bounds.

\subsection{Arithmetic Combinatorics}

The fields of extremal and additive combinatorics have witnessed remarkable progress over the last century.
In this section, we survey several central problems and results in this area;
for a comprehensive introduction, the reader is referred to the excellent works of Kowalski~\cite{Kow24} and Peluse~\cite{Pel24}.

\paragraph{Arithmetic Progressions.}
One of the best-known early results in the field of arithmetic combinatorics is van der Waerden's theorem~\cite{vdWar27}:

\begin{theorem}
	For every positive integers $r,k\in \N$, there exists an integer $n\in \N$, such that if the numbers $[n] = \set{1,2,\dots,n}$ are colored with $r$ colors, then at least one of the colors contains a $k$-term arithmetic progression.
	
	By a $k$-term arithmetic progression, we mean a sequence of the form \[a,\ a+d, \ a+2d,\dots, a+(k-1)d,\]
	for integers $a,d$ with $d\not=0$.
\end{theorem}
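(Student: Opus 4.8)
The plan is to prove the equivalent quantitative statement that the van der Waerden number $W(k,r)$ --- the least $n$ such that every coloring $\chi\colon[n]\to[r]$ contains a monochromatic $k$-term arithmetic progression --- is finite for all $k,r\in\N$, by induction on $k$, handling all values of $r$ simultaneously. The base case $k=2$ is the pigeonhole principle: among any $r+1$ integers two share a color and form a $2$-term progression, so $W(2,r)\le r+1$. For the inductive step I would assume $W(k,r')<\infty$ for \emph{every} palette size $r'$, and deduce, for each fixed $r$, that $W(k+1,r)<\infty$. (One could alternatively deduce the theorem from the Hales--Jewett theorem by a coordinate projection, but that merely relocates the same combinatorial induction.)

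The mechanism is the classical device of \emph{color-focused} progressions. Call $k$-term progressions $P_1,\dots,P_m\subseteq[n]$, with $P_i=\set{a_i,a_i+d_i,\dots,a_i+(k-1)d_i}$, \emph{color-focused at} $f\in[n]$ if each $P_i$ is monochromatic, the $m$ colors used are pairwise distinct, and $a_i+kd_i=f$ for every $i$ (so extending any $P_i$ by one step lands on the common point $f$, which itself need not lie in any $P_i$). I would establish, by an inner induction on $m=1,2,\dots,r$, the statement: there is a finite $N_m$ (depending on $k$ and $r$) so that every $r$-coloring of $[N_m]$ contains \emph{either} a monochromatic $(k+1)$-term progression \emph{or} a family of $m$ progressions color-focused at a common point. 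Once this is known for $m=r$ we are done: given $r$ color-focused progressions with $r$ distinct colors and common focus $f$, the color of $f$ equals the color of some $P_i$, whence $P_i\cup\set f$ is a monochromatic $(k+1)$-term progression, so $W(k+1,r)\le N_r$. The base case $m=1$ is just the outer hypothesis applied on a slightly enlarged interval to keep the focus $a+kd$ in range; $N_1=2\,W(k,r)$ suffices.

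The heart of the matter is the inner step $m\to m+1$. Put $L=N_m$, let $M=W(k,r^{L})$ (here the outer hypothesis is invoked with the exponentially larger palette of all colorings of a length-$L$ block), and work in $[(k+1)ML]$, partitioned into $(k+1)M$ consecutive blocks of length $L$. Coloring each block by its internal coloring pattern (one of at most $r^{L}$ patterns) and applying the outer hypothesis to the first $M$ blocks yields indices $b,b+e,\dots,b+(k-1)e$, all at most $M$, whose blocks share a single pattern $\chi$; note the block indexed $b+ke\le(k+1)M$ still exists. Inside the seed block $B_b$ we either already have a monochromatic $(k+1)$-term progression (and are done) or --- by definition of $L$ --- a family $P_1,\dots,P_m$ color-focused at a point $f$ inside $B_b$; and, crucially, if $B_b$ contains no monochromatic $(k+1)$-term progression, then the color $c_0$ of $f$ under $\chi$ must differ from the colors $c_1,\dots,c_m$ of $P_1,\dots,P_m$ (otherwise $P_i\cup\set f$ would already be one). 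Now ``shear'' along the block-progression: for $i\in\set{1,\dots,m}$ let $P_i^{*}$ be the progression whose $t$-th term ($t=1,\dots,k$) is the $t$-th term of $P_i$ read inside block $B_{b+(t-1)e}$, and let $P_0^{*}$ have $t$-th term the point at position $f$ inside $B_{b+(t-1)e}$. Since all $k$ of these blocks carry pattern $\chi$, each $P_i^{*}$ is monochromatic of color $c_i$ and $P_0^{*}$ is monochromatic of color $c_0$; a one-line computation gives $P_i^{*}$ common difference $eL+d_i$, $P_0^{*}$ common difference $eL$, and shows that every one of $P_0^{*},P_1^{*},\dots,P_m^{*}$ extends by one more step to the \emph{same} point $g$, namely position $f$ inside block $B_{b+ke}$. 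If the color of $g$ agrees with one of $c_0,\dots,c_m$ we get a monochromatic $(k+1)$-term progression; otherwise $P_0^{*},\dots,P_m^{*}$ are $m+1$ monochromatic $k$-term progressions with the $m+1$ distinct colors $c_0,\dots,c_m$, all focused at $g$. Either way the inner step is complete, so each $N_m$ is finite and the induction on $k$ closes.

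I expect the genuine difficulty to be the bookkeeping in this inner step --- in particular recognizing that only $k$ (not $k+1$) blocks need to share a pattern, that the $(k+1)$-st block merely houses the focus point $g$, and that ruling out a monochromatic $(k+1)$-term progression inside the seed block is precisely what makes the focus color $c_0$ fresh, so the count of color-focused progressions goes up by one rather than stalling. The arithmetic of the shearing and the choice of interval lengths are routine, and no effort should go into optimizing the resulting bound on $W(k,r)$, which grows like an Ackermann-type function of $k$.
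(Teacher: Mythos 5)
Your proof is correct, but there is nothing in the paper to compare it against: the paper states van der Waerden's theorem purely as cited background (reference to van der Waerden's 1927 paper), offers no proof of its own, and only remarks later that it follows from the Hales--Jewett theorem by restricting to progressions of the form $a,a+d,\dots,a+(k-1)d$ with constant $d$ (the ``coordinate projection'' you allude to). What you have written out is the classical two-level induction with color-focused progressions, essentially van der Waerden's original argument as streamlined by Graham and Rothschild. The bookkeeping you flagged as the delicate part is indeed handled correctly: the block-level application of the outer hypothesis needs only $k$ blocks in a common pattern because the $(k+1)$-st block serves solely to house the new focus $g$; the base case $N_1=2\,W(k,r)$ gives enough room since the common difference of a $k$-term progression inside $[W(k,r)]$ is at most $W(k,r)-1$; and the absence of a monochromatic $(k+1)$-term progression inside the seed block is exactly what forces the focus color $c_0$ to be new, so the count advances to $m+1$ rather than stalling. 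The shearing arithmetic (differences $eL$ and $eL+d_i$, common next term at position $f$ in block $B_{b+ke}$) is also right, and $e\neq 0$ guarantees these are genuine progressions. If you wanted an argument closer in spirit to the paper's framing, you would instead prove Hales--Jewett by the analogous focusing induction on alphabet size and then specialize; but for the stated theorem your direct route is the more elementary one and is fully rigorous.
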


This fundamental result naturally led to a \emph{density version} of the same question:
\begin{question}
	What is the largest density $\frac{\abs{A}}{n}$ of a subset $A\subseteq [n]$ containing no $k$-term arithmetic progression?
\end{question}

The first non-trivial case is that of $k=3$, and this was studied by Roth, who showed that any set $A\subseteq [n]$ of size $\abs{A}\geq \Omega\brac{\frac{n}{\log\log n}}$ must contain a 3-term arithmetic progression~\cite{Roth53}.
The question for longer progressions remained largely open until the seminal work of Szemer\'edi~\cite{Sze75}, which made substantial progress and showed that for any $k\in \N$, any set $A\subseteq [n]$ of size $\abs{A}\geq \Omega_k(n)$ must contain a $k$-term arithmetic progression.

Szemer\'edi's proof was purely combinatorial, and led to the development of powerful graph theoretic tools like Szemer\'edi's regularity lemma;
however, it gave extremely weak quantitative bounds.
Subsequent decades saw significant efforts to obtain better quantitative bounds, mostly via analytic means.
A major advancement was made by Gowers, who observed that Fourier analysis, used by Roth in his proof, was not sufficient to address progressions of length $k\geq 4$.
This insight led him to develop higher-order Fourier analysis, thereby providing reasonable bounds for progressions of all lengths~\cite{Gow98, Gow01}.

While the past few decades have seen numerous advances~\cite{HB87, Sze90, Bou99, Bou08, San11, San12, BK12, Blo16, Sch21, BS21}, here we only mention some recent breakthrough results that dramatically pushed the above bounds.
Kelley and Meka proved that sets free of 3-term progressions are of density at most $2^{-\brac{\log n}^{\Omega(1)}}$~\cite{KM23}.
This comes tantalizingly close to the almost matching construction of Behrend of size $2^{-O(\sqrt{\log n})}$~\cite{Beh46}.
For sets free of $k$-term progressions, Leng, Sah, and Sawhney prove a density upper bound of $2^{-\brac{\log\log n}^{\Omega_k(1)}}$~\cite{LSS24}.

The high-dimensional version of this problem, with the set $[n]$ replaced by $\F_p^n$, has also attracted a lot of interest.
Such finite field variants often tend to be more tractable than their integer counterparts, and serve as a useful proxy to develop tools to attack the integer problems.
Furthermore, such problems often have many applications in theoretical computer science.
We note that in this setting, for $k=3$, exponentially small upper bounds have been achieved via the polynomial method~\cite{CLP17, EG17}: any set $A\subseteq \F_p^n$ containing no 3-term progression is of size $\abs{A}\leq c^n$, for some $c<p$.
Obtaining such strong bounds for $k\geq 4$ remains a notorious open problem.

\paragraph{Combinatorial Lines.} The Hales-Jewett Theorem~\cite{HJ63} represents a profound generalization over the above results, shifting from arithmetic progressions to more general combinatorial structures, defined as follows:

\begin{definition}\label{defn:comb_line} (Combinatorial lines)
	Let $q, n\in \N$.
	A collection of $q$ points $a(1),\dots, a(q) \in [q]^n$ is said to be a \emph{combinatorial line} if:
	\begin{enumerate}
		\item For all $i\in [n]$, either $(a(1)_i,\dots, a(q)_i)= (1,2,3,\dots,q)$ or $a(1)_i=a(2)_i=\dots=a(q)_i$.
		\item For some $i\in [n]$, it holds that $(a(1)_i,\dots, a(q)_i) = (1,2,3,\dots,q)$.
	\end{enumerate}
	We define $r_q(n) = \frac{\abs{A}}{q^n}$, where $A\subseteq [q]^n$ is a largest set  containing no combinatorial line.
\end{definition}

The Hales-Jewett Theorem says that for any $c,q\in \N$, and large enough $n$, any $c$-coloring of $[q]^n$ contains a monochromatic combinatorial line~\cite{HJ63}.
It can be shown that this generalizes the van der Waerden's theorem.

Furstenberg and Katznelson, using techniques from ergodic theory, proved a density version of the Hales-Jewett theorem~\cite{FK91}.
This says that for every $q\in \N$, $r_q(n) \to 0$ as $n\to \infty$.
The current best-known decay bounds for $r_q(n)$ are of the form $\frac{1}{\alpha(n)}$, where $\alpha(n)$ is an extremely slow growing inverse-Ackermann function~\cite{Pol12}.
We note that the special cases $q=1,2$ are well-understood, with $r_1(n)=0$ trivially, and $r_2(n) = \frac{1}{2^n}\binom{n}{\lfloor{n/2}\rfloor} = \Theta\brac{\frac{1}{\sqrt{n}}}$ by Sperner's Theorem.
A very recent work of Bhangale, Khot, Liu, and Minzer~\cite{BKLM25} shows that $r_3(n) \leq \brac{\log\log\log\log{n}}^{-\Omega(1)}$.

\paragraph{Corners and Squares.}

Related problems of interest include those of determining the maximum density of sets which are corner-free and square-free.

\begin{definition}\label{defn:corner_sq}
	Let $G$ be a set in some underlying Abelian group.\footnote{we will only be interested in the cases $G=[n]\subseteq \mathbb{Z}$ or $G=\F_2^n$}
	\begin{itemize}
		\item A \emph{corner} in $G\times G$ is a set of the form \[\set{(x,y), (x+d,y), (x,y+d)},\] where $x,y,d\in G,\ d\not=0$.
		
			Define $r_{\angle}(G)$ to be the maximum density $\frac{\abs{A}}{\abs{G}^2}$ of a set $A\subseteq G\times G$ containing no corner.
		
		\item A square in $G\times G$ is a set of the form \[\set{(x,y), (x+d,y), (x,y+d), (x+d,y+d)},\] where $x,y,d\in G,\ d\not=0$.
			
			Define $r_{\square}(G)$ to be the maximum density $\frac{\abs{A}}{\abs{G}^2}$ of a set $A\subseteq G\times G$ containing no square.
	\end{itemize}
	Note that every square contains a corner, and so $r_{\angle}(G)\leq r_{\square}(G)$.
\end{definition}

Ajtai and Szemer\'edi showed that $r_{\angle}([n]) = o_n(1)$~\cite{AS74}; this also follows from the density Hales-Jewett theorem mentioned earlier.
The best-known upper bounds here are $r_{\angle}([n]) \leq \frac{1}{\brac{\log\log{n}}^{\Omega(1)}}$~\cite{Shk06} and $r_{\angle}(\F_2^n) \leq O\brac{\frac{\log\log{n}}{\log{n}}}$~\cite{LM07}.
It remains open whether an exponential upper bound, as for the case of 3-term arithmetic progressions, can be proven in the finite field setting for this problem.

For square-free sets, to the best of our knowledge, the best-known upper bounds follow from the density Hales-Jewett theorem: $r_{\square}(\F_2^n) \leq r_4(n)$, since every combinatorial line in $\set{1,2,3,4}^n$ is a square in $\F_2^n\times \F_2^n$, by identifying $\set{1,2,3,4}$ with $\F_2\times \F_2$.
Obtaining improved bounds here is a very challenging open problem and is of central interest (Problem 4.4 in Peluse~\cite{Pel24}).

\subsection{This Work}

In this work, we prove connections between multiplayer parallel repetition bounds and bounds for several central problems in additive combinatorics.
These connections follow from studying the so-called forbidden subgraph bounds:
For a game multiplayer $\mc G$ with $\val(\mc G)<1$, very roughly speaking, a forbidden subgraph in $\mc G^{n}$ is a \emph{pattern} that cannot be present inside the winning set corresponding to any player strategies.
Upper bounds on the sizes of sets free of such patterns naturally translate to similar bounds on the value of the game $\mc G^n$, and this is the forbidden subgraph method (see Section~\ref{sec:for_sub} for details).

The forbidden subgraph method was one of the first approaches to prove parallel repetition bounds for 2-player games.
Using this technique, Cai, Condon, and Lipton showed an exponential decay bound for \emph{free games}, which are games in which the questions to the two players come from a product distribution~\cite{CCL92}.
Verbitsky proved that for any $2$-player game, combinatorial lines (of an appropriate length) form a forbidden subgraph; this extends to all $k$-player games (see Section~\ref{sec:dhj_conn}) and gives the current best decay bound for the value of parallel repetition of general $k$-player games with value less than 1~\cite{Ver96, FV02}.

Following the works~\cite{Ver96, FV02, HHR16} we show that when the game $\mc G$ is allowed to have a large answer alphabets, bounds obtained via forbidden subgraph are the best possible (Theorem~\ref{thm:game_val_lb_forbidden}).
Then, we use this to show several connections between multiplayer parallel repetition and the problems described in the previous section.

The first connection is to combinatorial lines and the density Hales-Jewett theorem. We note that this was already observed in~\cite{FV02, HHR16}:

\begin{theorem}\label{thm:intro_dhj} (Multiplayer Parallel Repetition and Combinatorial Lines)
\begin{enumerate}
	\item (Corollary~\ref{corr:val_ub_forbidden}, Proposition~\ref{prop:comb_lin_forb}) For every game $\mc G$ with $\val(\mc G)< 1$, combinatorial lines are forbidden subgraphs for the game $\mc G$. In particular, for every $n\in \N$, it holds that  $\val(\mc G^n)\leq r_k(n)$, where $k$ is a constant depending on the game $\mc G$.\footnote{We assume here that the question distribution $\mu$ of $\mc G$ is uniform over its support; see Remark~\ref{remk:unif_dist}\label{footnote:unif}}
	\item (Corollary~\ref{corr:hj_inv}) Let $q\geq 3$ be a positive integer, and let \[Q = \set{(1,0,0,\dots,0), (0,1,0,\dots,0), \dots,(0,0,\dots,0,1)} \subseteq \set{0,1}^q\] be the set of size $q$ containing all vectors with 1 one and $(q-1)$ zeros. Then, there exists a $q$-player game $\mc G$ over questions $Q$, with an infinite answer alphabet, such that for every $n\in \N$, it holds that $\val\brac{\mc G^{n}} = r_q(n)$.\footnote{We note that if the game is allowed to depend on $n$, we can assume the answer length is finite and bounded by $\exp(n)$; see Theorem~\ref{thm:game_val_lb_forbidden} and Remark~\ref{remk:inf_length}\label{footnote:inf_length}}
\end{enumerate}
\end{theorem}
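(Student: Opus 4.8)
The plan is to route both parts through the forbidden-subgraph correspondence, after fixing the dictionary between $\mc G^n$ and point sets. Write $k=\abs{\supp(\mu)}$ and enumerate $\supp(\mu)=\set{e(1),\dots,e(k)}$; since the referee draws $n$ i.i.d.\ samples from the uniform distribution on $\supp(\mu)$ (using Remark~\ref{remk:unif_dist}), an instance of $\mc G^n$ is a uniformly random point $w\in[k]^n$, and the question vector handed to player $j$ is the coordinatewise projection $\pi_j(w):=(\pi_j(w_1),\dots,\pi_j(w_n))$, where $\pi_j(t)\in X_j$ is the $j$-th coordinate of $e(t)$. Restricting to deterministic strategies, $\val(\mc G^n)=\max_W \abs{W}/k^n$, the maximum over the winning sets $W\subseteq[k]^n$ of strategies.

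For part~1 the heart of the matter is Proposition~\ref{prop:comb_lin_forb}: no winning set of $\mc G^n$ can contain a combinatorial line when $\val(\mc G)<1$. Given a combinatorial line $a(1),\dots,a(k)$ with nonempty wildcard set $L$, fix $i^\ast\in L$. For each player $j$, the view $\pi_j(a(s))$ equals the single symbol $\pi_j(s)$ on all of $L$ and a fixed string (independent of $s$) off $L$; in particular player $j$'s answer on instance $a(s)$ at coordinate $i^\ast$ depends on $s$ only through $\pi_j(s)\in X_j$. Hence from strategies $(f_j)_j$ for $\mc G^n$ one reads off single-shot strategies $g_j(x):=f_j(v^{(j)}_x)_{i^\ast}$ for $\mc G$, where $v^{(j)}_x$ is the question vector equal to $x$ on $L$ and to the off-$L$ part of the line elsewhere; this is well defined by the previous sentence, and on base question $e(s)$ it reproduces the coordinate-$i^\ast$ answer profile that $(f_j)_j$ plays on instance $a(s)$. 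If $(f_j)_j$ won on all of $a(1),\dots,a(k)$, then $(g_j)_j$ wins $\mc G$ on every question of $\supp(\mu)$, forcing $\val(\mc G)=1$, a contradiction. So every winning set is free of combinatorial lines, whence $\abs{W}\le r_k(n)\cdot k^n$ and $\val(\mc G^n)\le r_k(n)$; this, together with Proposition~\ref{prop:comb_lin_forb}, is exactly what Corollary~\ref{corr:val_ub_forbidden} packages, and gives part~1.

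For part~2 I would instantiate Theorem~\ref{thm:game_val_lb_forbidden} at the blueprint $Q$. Here $\abs{\supp(\mu)}=\abs{Q}=q$, so part~1 already yields $\val(\mc G^n)\le r_q(n)$ for any game over $Q$ with value below $1$; the content is the matching lower bound. The intended strategy for a prescribed line-free set $A\subseteq[q]^n$ is the \emph{honest} one: player $j$, having learned $\Sigma_j=\set{i: w_i=j}\subseteq[n]$, outputs at coordinate $i$ the answer $(i,\Sigma_j,A)$ (the answer alphabet being the countable set of such triples, with the ground set encoded in $A$); the predicate, seeing question $e(s)$ at a coordinate and answers $(\iota_j,S_j,A_j)_j$, accepts iff all $A_j$ agree and are line-free, the $S_j$ partition the common ground set $[m]$, all $\iota_j$ equal some $\iota\in[m]$ with $\iota\in S_s$, and the point induced by the partition lies in $A$. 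Running this with $A$ a largest line-free set wins precisely on $A$, so $\val(\mc G^n)\ge r_q(n)$, and combined with part~1 (note $\val(\mc G)=r_q(1)=\frac{q-1}{q}<1$, so part~1 applies) we get $\val(\mc G^n)=r_q(n)$. The index cross-check is what forces truthfulness: without it the players could report one fixed valid configuration $(\Sigma_j^0,A^0)$ in every coordinate and win every coordinate of every instance, giving $\val=1$; with it, a player reporting a fixed or otherwise instance-independent index is pinned down to a single instance, and the general case reduces (as in the proof of Theorem~\ref{thm:game_val_lb_forbidden}) to a coordinate-relabeled copy of $A$. To get one game for all $n$ simultaneously one takes the union over $m\in\N$ of these alphabets; if $\mc G$ may depend on $n$, the alphabet truncates to the finite, $\exp(n)$-sized piece actually used, as in Remark~\ref{remk:inf_length}.

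The main obstacle is Theorem~\ref{thm:game_val_lb_forbidden} itself: constructing the large-alphabet game and proving that no strategy beats the honest one. Beyond the hallucination issue above, a cheating player's answer at a coordinate is an arbitrary function of its \emph{entire} view, so one must rule out strategies that adaptively relabel coordinates (harmless, since coordinate permutations preserve line-freeness) or collapse the ground set (impossible while keeping the claimed parts disjoint), and one must confirm that for the blueprint $Q$ the only obstructions that survive are combinatorial lines, so that the forbidden-subgraph bound is exactly $r_q(n)$ rather than something smaller. That verification, carried out uniformly for all $k$-player blueprints, is the delicate core of the paper and is what Corollary~\ref{corr:hj_inv} then specializes.
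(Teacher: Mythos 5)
Your part~1 argument is correct and is in substance the paper's route: you unwind Proposition~\ref{prop:game_val_ub_forbidden} (a forbidden subgraph in a winning set yields a perfect single-shot strategy) and Proposition~\ref{prop:comb_lin_forb} (a combinatorial line in $Q^n$ is a forbidden subgraph) into a single direct extraction of strategies $g_j$, which is fine. Identifying $Q$ with $[k]$ via $e(1),\dots,e(k)$ and observing that $\pi_j(a(s))$ depends on $s$ only through $\pi_j(s)$ is exactly the third condition in Definition~\ref{defn:forbidden_subgraph}, and the rest goes through.

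For part~2 your construction is genuinely different from the paper's, and the difference is worth noting: the paper (Theorem~\ref{thm:game_val_lb_forbidden} plus Remark~\ref{remk:inf_length}) hard-codes a fixed maximum line-free set $W_n\subseteq Q^n$ into the predicate for each $n$ and has each player answer $(i,y^{(j)})$ with the verifier checking membership in $W_n$; you instead have the players \emph{declare} the line-free set $A$ (and partition) themselves, with the verifier checking line-freeness and consistency. This avoids ever mentioning $E_Q(n)$ explicitly and hands you a single game for all $n$ without a separate gluing step, which is a mild simplification of the packaging. The cost is that the analysis shifts: rather than proving the identity $E_Q(n)=r_q(n)$ (the paper's case analysis for the blueprint $Q$, showing every forbidden subgraph over $Q$ is a combinatorial line), you must show directly that $\val(\mc G)<1$ for your game, i.e.\ that a strategy winning on every $e(s)$ forces the $q$ induced points $p_1,\dots,p_q\in[q]^m$ (one per question, coming from the declared partitions) to form a combinatorial line inside the agreed-upon, supposedly line-free $A$.

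That step is the actual content, and your write-up leaves it as an acknowledged gap (``The main obstacle is\ldots proving that no strategy beats the honest one,'' and the parenthetical ``$\val(\mc G)=r_q(1)$'' is assumed, not argued). Concretely one needs: by connectivity of $F_{\mc X,Q}$ and $q\ge 3$, all players must declare the same $A$ and the same index $\iota$; then for the wildcard coordinate $p_s(\iota)=s$ because $\iota\in S_s^1$; and for every other coordinate $i\ne\iota$, one argues (via the partition constraint holding simultaneously for all $q$ questions) that either $i$ lies in exactly one $S_{j_0}^{0}$ and then in $S_{j_0}^{1}$ as well, making $p_s(i)\equiv j_0$, or $i$ lies in no $S_j^{0}$, forcing $i\in S_s^{1}$ for every $s$ and $p_s(i)=s$. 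This is precisely the same case analysis the paper carries out in the proposition preceding Corollary~\ref{corr:hj_inv}, just transplanted into your game. Without it your proposal is incomplete; with it your alternative construction would be a valid (and arguably tidier) route to part~2.
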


The question set $Q$ above has been well-studied in the past, in the context of a specific 3-player game known as the anti-correlation game.
In the anti-correlation game, the referee samples questions $(x,y,z)\sim \set{(1,0,0), (0,1,0), (0,0,1)}$ uniformly at random, and the two players who get input 0 are required to produce different answers in $\set{0,1}$.
This game has value 2/3, and Feige showed the value of the 3-fold repetition of this game is also $2/3$~\cite{Fei95}.
Later, Holmgren and Yang present this game as an example of a game whose non-signaling value is less than one (2/3 in this case), and whose non-signaling value does not decrease at all under parallel repetition~\cite{HY19}; this is in stark contrast to 2-player games where the non-signaling value decays exponentially with the number of repetitions $n$~\cite{Hol09}.
For the anti-correlation game, exponential decay bounds on the value of parallel repetition were proven recently~\cite{GHMRZ22}, and polynomial bounds were proven for all 3-player games with the question set $Q= \set{(1,0,0), (0,1,0), (0,0,1)}$ and with constant answer lengths~\cite{GMRZ22}.

We also mention here the recent work of Bhangale, Khot, Liu, and Minzer~\cite{BKLM25}, which shows that $r_3(n) \leq \brac{\log\log\log\log{n}}^{-\Omega(1)}$, which we believe is very interesting in the context of Theorem~\ref{thm:intro_dhj}.
At a very high level, the crux of their proof is to show that any set $A\subseteq \set{1,2,3}^n=[3]^n$ free of combinatorial lines of length 3, must have increased density on the intersection of what are known as \emph{insensitive sets}.
When viewed under the lens of parallel repetition (of the corresponding game under Theorem~\ref{thm:intro_dhj}), these insensitive sets turn out to be subsets of inputs corresponding to a single player, and their intersection is hence a \emph{product set} among the 3 players.
Such product sets arise naturally when studying parallel repetition, for example in the celebrated proof of Raz~\cite{Raz98}!
We believe this connection can potentially lead to a better understanding of the insensitive sets one needs to study when looking at certain extremal combinatorics problems.

Next, we show a similar connection between square-free sets and the well-known GHZ game~\cite{GHZ89}:

\begin{theorem}\label{intro_squares} (Multiplayer Parallel Repetition and Squares; Corollary~\ref{corr:square})

Let $Q = \set{(x,y,z)\in \F_2^3: x+y+z=0}$.
Then, it holds that:

\begin{enumerate}
	\item For every 3-player game $\mc G$ over the question set $Q$, and $\val(\mc G)< 1$, squares are forbidden subgraphs for $\mc G^n$. In particular, for every $n\in \N$, it holds that  $\val(\mc G^n)\leq r_{\square}(\F_2^n)$.\footref{footnote:unif}
	\item There exists a $3$-player game $\mc G$ over questions $Q$, with an infinite answer alphabet, such that for every $n\in \N$, it holds that $\val\brac{\mc G^{n}} = r_{\square}(\F_2^n)$.\footref{footnote:inf_length}
\end{enumerate}
\end{theorem}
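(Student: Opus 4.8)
Both parts are applications of the forbidden‑subgraph machinery: the plan is to use Corollary~\ref{corr:val_ub_forbidden} for the upper bound in part~1, and Theorem~\ref{thm:game_val_lb_forbidden} for the matching construction in part~2. The only content specific to the GHZ question set $Q$ is the claim that, for games over $Q$, the relevant forbidden subgraphs are exactly the squares in $\F_2^n\times\F_2^n$ — a strengthening of the generic bound by length‑$4$ combinatorial lines (which is all Theorem~\ref{thm:intro_dhj} would give, since $|Q|=4$) that exploits the linear relation $z=x+y$.

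For part~1, I would first identify the question‑transcripts of $\mc G^n$ with $\F_2^n\times\F_2^n$: a transcript sends Players $1,2,3$ the strings $X$, $Y$, $X+Y$ for $(X,Y)\in\F_2^n\times\F_2^n$, and (as $\mu$ is taken uniform on $Q$) this identification sends the question distribution of $\mc G^n$ to the uniform distribution on $\F_2^n\times\F_2^n$. Fix a strategy $(f,g,h)$ and let $W\subseteq\F_2^n\times\F_2^n$ be its winning set; the claim is that $W$ is square‑free. Suppose not: say $\set{(X,Y),(X+D,Y),(X,Y+D),(X+D,Y+D)}\subseteq W$ with $D\ne 0$, and pick a coordinate $i_0$ with $D_{i_0}=1$. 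Across these four transcripts Player $1$ receives only the two inputs $X,X+D$, Player $2$ only $Y,Y+D$, and Player $3$ only $X+Y,X+Y+D$; and since $D_{i_0}=1$, each player's input (among the four) is in bijection with the single‑copy question that player gets in coordinate $i_0$. Hence the four transcripts let me read off a well‑defined single‑copy strategy for each player — Player $1$ answers $f(X)_{i_0}$ on question $x_{i_0}$ and $f(X+D)_{i_0}$ on $x_{i_0}+1$, and similarly for Players $2,3$ — and for each of the four transcripts this strategy wins the copy of $\mc G$ played in coordinate $i_0$, because the transcript lies in $W$. As the four coordinate‑$i_0$ questions exhaust $Q$, this forces $\val(\mc G)=1$, a contradiction. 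So squares are forbidden subgraphs for $\mc G$ (the square analogue of Proposition~\ref{prop:comb_lin_forb}), and Corollary~\ref{corr:val_ub_forbidden} together with the measure‑preserving identification above gives $\val(\mc G^n)\le r_\square(\F_2^n)$.

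For part~2, the plan is to invoke Theorem~\ref{thm:game_val_lb_forbidden} with question set $Q$ and squares as the prescribed family of forbidden subgraphs, obtaining a $3$‑player game $\mc G$ over $Q$ — with an infinite answer alphabet, or a finite one of size $\exp(n)$ if $\mc G$ is allowed to depend on $n$ (see Remark~\ref{remk:inf_length}) — whose $n$‑fold repetition has value equal to the density of the largest square‑free subset of $\F_2^n\times\F_2^n$, i.e.\ $r_\square(\F_2^n)$. In particular $\val(\mc G)=r_\square(\F_2)=3/4<1$, so part~1 applies to this $\mc G$ and certifies that its repetition value cannot exceed $r_\square(\F_2^n)$; combining the two bounds gives $\val(\mc G^n)=r_\square(\F_2^n)$ for every $n$.

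\textbf{Main obstacle.} I expect the delicate point to be the tightness of the forbidden‑subgraph analysis for $Q$ in part~1: one must check that squares — not merely the generically forbidden length‑$4$ combinatorial lines — are forbidden, and the subtle step is precisely that the single‑copy strategies reconstructed above are well‑defined. This is where the GHZ relation $z=x+y$ is used: it makes each player's input an affine coordinate of the transcript, so that among the four corners of a square a player distinguishes only two transcripts, and those two are in bijection with that player's two possible coordinate‑$i_0$ questions; this is what lets four transcripts (rather than the four collinear points of a combinatorial line) certify a full winning single‑copy strategy. The corresponding subtlety in part~2 is to ensure that the forbidden subgraphs of the game produced by Theorem~\ref{thm:game_val_lb_forbidden} over $Q$ are precisely the squares — so that its repetition value is $r_\square(\F_2^n)$ rather than something smaller — which again comes down to the affine structure of $Q$.
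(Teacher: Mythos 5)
Part 1 is correct and is essentially the paper's argument: you are inlining Proposition~\ref{prop:game_val_ub_forbidden} (reconstructing a perfect single-copy strategy from the four corners of a square, using a coordinate $i_0$ with $D_{i_0}=1$), together with the observation that squares in $\F_2^n\times\F_2^n$ are forbidden subgraphs in the sense of Definition~\ref{defn:forbidden_subgraph} for this $Q$. That gives $\val(\mc G^n)\leq r_\square(\F_2^n)$ directly, which is exactly the paper's ``$E_Q(n)\leq r_\square(\F_2^n)$'' half of the proposition in Section~\ref{sec:square_conn} applied through Corollary~\ref{corr:val_ub_forbidden}.

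Part 2, however, has a genuine gap. Theorem~\ref{thm:game_val_lb_forbidden} does not let you ``prescribe squares as the family of forbidden subgraphs'': it always produces a game whose $n$-fold value is $E_Q(n)$, where $E_Q(n)$ is defined via the fixed combinatorial notion in Definition~\ref{defn:forbidden_subgraph}. To conclude $\val(\mc G^n)=r_\square(\F_2^n)$ you must separately prove $E_Q(n)=r_\square(\F_2^n)$, and while Part~1 handles the $\leq$ direction, the $\geq$ direction is the substantive step: you must show that \emph{every} forbidden subgraph of $H_{\mc X^n,Q^n}$ is a square, i.e.\ that a square-free set of transcripts has no forbidden subgraph. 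You name this as the ``corresponding subtlety'' and say it ``comes down to the affine structure of $Q$,'' but you never carry it out. The actual verification (as in the paper) takes a forbidden subgraph with edges $(x,y,z),(x,y',z'),(x',y,z'),(x',y',z)$ whose $i$-th coordinates split as $0,1$ as dictated by condition~3 of Definition~\ref{defn:forbidden_subgraph}, uses the four linear relations $x+y+z=x+y'+z'=x'+y+z'=x'+y'+z=0$ to deduce $x+x'=y+y'=z+z'=:d\neq 0$ and $z=x+y$, and concludes that the projection to the $(x,y)$-plane is a square. Without this, the converse direction is only asserted. A second, more minor omission: Theorem~\ref{thm:game_val_lb_forbidden} requires the graph $F_{\mc X,Q}$ to be connected, which should be checked for the GHZ $Q$ (it is the complete tripartite graph, so this is immediate, but it is a hypothesis of the theorem you are invoking).
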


We note that the first point in the above theorem was already (implicitly) exploited in the work~\cite{GHMRZ21}, which provides polynomial decay bounds for games with the above question set $Q$, and small answer alphabets; see Remark~\ref{rmk:square_in_par_rep}.

Next, we generalize the above result to \emph{grids} over an arbitrary finite field $\F$.
In terms of extremal combinatorics, we note that in a sense grids seem to capture the largest linear forbidden structures.
On the other hand, in terms of parallel repetition, we show that these capture all games where the question distribution is supported over a linear subspace.

\begin{definition}\label{defn:grid_free} (Grid-free sets)
	Let $k,n\in \N$.
	
	A grid in $(\F^n)^k$ is a set of the form $\set{x + \alpha\otimes d : \alpha\in \F^k}$ for some $x\in (\F^n)^k$,\ $d\in \F^n,\ d\not=0$.
	We use the Kronecker product notation $\alpha\otimes d = (\alpha^\up{1}d,\dots,\alpha^\up{k}d) \in (\F^{n})^k,$ for $\alpha\in \F^k,\ d\in \F^n$.
	
	Define $r_{grid}(\F,k,n)$ as the maximum density $\frac{\abs{A}}{\abs{\F}^{kn}}$ of a set $A\subseteq (\F^n)^k$ containing no grid.
\end{definition}

We prove the following result:

\begin{theorem}\label{thm:intro_grid} (Multiplayer Parallel Repetition and Grids)

\begin{enumerate}
	\item (Proposition~\ref{prop:lin_game_grid_free}) Let $\mc G$ be a multiplayer game with $\val(\mc G)< 1$, whose questions are supported over an affine subspace (over $\F$) of dimension $k$.
	 Then, for every $n\in \N$, it holds that  $\val(\mc G^n)\leq r_{grid}(\F, k, n)$.\footref{footnote:unif}
	\item (Corollary~\ref{corr:grid_free}) Suppose $\abs{\F} = p^r$, for some prime $p$, and $r\in \N$.
	Then, for every integer $k\geq 2$, there exists a $(k+r)$-player game $\mc G$, whose questions are supported over an affine subspace (over $\F$) of dimension $k$, with an infinite answer alphabet, and such that for every $n\in \N$, it holds that $\val(\mc G^{n}) = r_{grid}(\F,k, n)$.\footref{footnote:inf_length}
\end{enumerate}
\end{theorem}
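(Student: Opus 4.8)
The first part should follow the general forbidden-subgraph framework already set up in the excerpt. I would argue that a grid is exactly the image, under the question map, of a combinatorial-line-type pattern: if the questions of $\mc G$ are drawn uniformly from an affine subspace $U\subseteq \F^k$ of dimension $k$ (so $U$ is an affine copy of $\F^k$ itself, identified with the $k$ players' question coordinates via some affine isomorphism $\phi:\F^k\to U$), then running $n$ copies puts the joint question tuple in $U^n$, and a grid $\set{x+\alpha\otimes d:\alpha\in\F^k}$ in $(\F^n)^k$ is precisely a set of $|\F|^k$ coordinate-tuples that, coordinate by coordinate, is either constant or traces out the full affine line $\set{\phi(\alpha)_i + \alpha\,d_i:\alpha\in\F^k}$ — wait, more carefully, in each block $i\in[n]$ the restriction is either a single question tuple (when $d_i=0$) or an affine-shift of the "diagonal" line, which is still a valid set of questions. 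The key point is the \emph{consistency} property: within a grid, any two points that agree in the $j$\textsuperscript{th} player's coordinate-block must receive the same answer from player $j$ under any fixed strategy, exactly as in the combinatorial-line argument of Verbitsky/Feige--Verbitsky. Since $\val(\mc G)<1$, there is at least one losing question tuple $(\bar x^\up 1,\dots,\bar x^\up k)$ (after fixing the winning-maximizing answers, some tuple in $U$ must still lose); placing that losing pattern in one nontrivial block $i$ (the block where $d_i\neq0$) forces the whole grid to contain a losing coordinate, so no grid can lie entirely inside the winning set $W\subseteq (\F^n)^k$ of any strategy. Hence $|W|/|\F|^{kn}\le r_{grid}(\F,k,n)$, and $\val(\mc G^n)=\max_W |W|/|\F|^{kn}\le r_{grid}(\F,k,n)$. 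This is Proposition~\ref{prop:lin_game_grid_free}; the main thing to get right is the bookkeeping translating "affine subspace of dimension $k$" into "grid" and verifying that the nontrivial line in each block consists of legal questions.

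For the second part — constructing a $(k+r)$-player game whose $n$-fold value \emph{equals} $r_{grid}(\F,k,n)$ — I would invoke the converse direction, Theorem~\ref{thm:game_val_lb_forbidden} (the "forbidden-subgraph bounds are tight over large alphabets" result), which says that given any forbidden pattern one can design a game with a large answer alphabet whose repeated value matches the extremal density. The work is in the choice of players and questions. With $|\F|=p^r$, view $\F$ as an $r$-dimensional vector space over $\F_p$; a grid direction $d\in\F^n\setminus\{0\}$ is what one wants to "hide", and the $\alpha\in\F^k$ coordinates should be read off by $k$ of the players. The natural design: $k$ players receive the $k$ coordinates $x^\up 1,\dots,x^\up k\in\F^n$ of a point of the grid (one player per coordinate), and the extra $r$ players are there to pin down the base point $x$ modulo the direction $d$ — because over $\F=\F_p^r$ a single "$\F_p$-coordinate check" only sees a $1$-dimensional slice, so one needs $r$ such players to collectively certify membership in the grid while each individual player's view is consistent with many grids. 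The questions are supported on an affine subspace of dimension $k$ by construction (the $\alpha\mapsto \alpha\otimes d$ map is affine of rank $k$ in $\alpha$). The verifier accepts iff the $k+r$ answers are mutually consistent with a common point of a common grid; the infinite (or $\exp(n)$-size, per Remark~\ref{remk:inf_length}) alphabet lets each player report enough information (essentially their coordinate plus an encoding of the purported $d$) so that the value is \emph{exactly} the grid-free density, not just bounded by it.

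The step I expect to be the genuine obstacle is \textbf{pinning down why exactly $r$ extra players are needed and suffice} — i.e. the matching of the "$+r$" in the player count to the degree $[\F:\F_p]=r$. The upper-bound direction (part 1) is robust and essentially a translation exercise. The lower-bound/construction direction needs the game's winning configurations to be in bijection with grid-free sets so that \emph{no} cleverer strategy beats the combinatorial bound; this requires that the only patterns a coalition of players can exploit are precisely grids, which in turn forces the question structure over $\F_p^r$ rather than over $\F$ abstractly — a point distribution over a $k$-dimensional $\F$-subspace is a $kr$-dimensional $\F_p$-subspace, and I would need to check that the "right" number of players to resolve this is $k+r$ and not, say, $kr$ or $k+1$. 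I anticipate that the clean way to see it is: the $k$ "coordinate" players handle the $\alpha$-direction and one would like a single extra player for the base point, but that player's question would need to range over an $r$-dimensional fiber, which is only affine-of-dimension-$1$ over $\F$ when $r=1$; for general $r$ one splits the fiber-check across $r$ players each contributing one $\F_p$-linear functional, keeping the total question distribution affine of dimension $k$ over $\F$ while making the per-player views appropriately blind. Verifying that this split is both necessary (fewer players leak too much or break the affine-dimension constraint) and sufficient (the resulting game's repeated value is exactly $r_{grid}$) is where the real content of the proof lies, and it is exactly the place where Theorem~\ref{thm:game_val_lb_forbidden} has to be applied with the grid pattern as input.
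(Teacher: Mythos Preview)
Your plan for Part~1 is on the right track and essentially matches the paper: one shows that a grid $\set{x+\alpha\otimes d:\alpha\in\F^k}$, pushed through the affine parametrization $\varphi:\F^k\to Q$, is a forbidden subgraph in $H_{\mc X^n,Q^n}$ (choosing any coordinate $i$ with $d_i\neq 0$), and then Corollary~\ref{corr:val_ub_forbidden} gives $\val(\mc G^n)\le E_Q(n)\le r_{grid}(\F,k,n)$. Your phrasing in terms of ``placing a losing pattern in block $i$'' is just the proof of Proposition~\ref{prop:game_val_ub_forbidden} inlined; nothing is missing there.

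Part~2 is where your proposal has a genuine gap. You correctly see that the task reduces, via Theorem~\ref{thm:game_val_lb_forbidden}, to exhibiting a $k$-dimensional linear subspace $Q\subseteq\F^{k+r}$ for which $E_Q(n)=r_{grid}(\F,k,n)$, i.e.\ for which \emph{every} forbidden subgraph of $H_{\mc X^n,Q^n}$ is a grid. But your description of the extra $r$ players --- ``pin down the base point $x$ modulo the direction $d$'', ``each contributing one $\F_p$-linear functional'', ``fiber-check'' --- is not the mechanism that works, and it is not clear how to make it work. The paper's construction is different in kind: fix a set $G\subseteq\F$ of size $r$ that generates the \emph{additive} group of $\F$, and let the $r$ extra players receive the $\F$-linear combinations
\[
y^\up{t} \;=\; t\cdot x^\up{1} + x^\up{2} + \dots + x^\up{k}, \qquad t\in G.
\]
The point of these questions is not to certify a base point but to force \emph{multiplicative} structure on the maps $\omega\mapsto x^\up{j}(\omega)$ arising in a forbidden subgraph. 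From the first $k$ players alone one only gets the additive Cauchy relation $x^\up{1}(\omega+\omega')-x^\up{1}(0)=\bigl(x^\up{1}(\omega)-x^\up{1}(0)\bigr)+\bigl(x^\up{1}(\omega')-x^\up{1}(0)\bigr)$; the $t$-th extra player contributes the relation $x^\up{1}(t)-x^\up{1}(0)=t\cdot\bigl(x^\up{1}(1)-x^\up{1}(0)\bigr)$. Together, because $G$ generates the additive group, these propagate to $x^\up{j}(\omega)-x^\up{j}(0)=\omega\cdot d$ for all $\omega\in\F$ with $d:=x^\up{1}(1)-x^\up{1}(0)$, which is exactly the grid condition. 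This is the step you flagged as the obstacle, and your $\F_p$-functional picture does not supply it; the ``$+r$'' really comes from the size of a generating set for $(\F,+)$, and the extra questions must be $\F$-linear (not $\F_p$-linear) combinations chosen so that the induced constraints witness scalar multiplication by every element of $\F$.
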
 

Finally, we note that a general consequence of all of the above results is that any parallel repetition bounds that do not depend on the answer lengths translate to similar bounds on central problems in extremal combinatorics.
While it may be very hard to prove such bounds in a black-box manner, we believe that the connections presented in this work may lead to a transfer of tools used in the two fields.
For example, to the best of our knowledge, the techniques used in parallel repetition (for constant-sized answers)~\cite{Raz98}, like induction on the parameter $n$ and information theory, seem to be very different than the techniques generally used in extremal/additive combinatorics.

\section{Multiplayer Games}

\begin{definition}\label{defn:multiplayer_games}(Multiplayer game)
	A $k$-player game $\mc G$ is a tuple $\mc G = (\mc X, \mc A, \mu, Q, V)$, where:
	\begin{enumerate}
		\item $\mc X = \mc X^\up{1} \times\dots\times \mc X^\up{k}$ is a finite product set, with $\mc X^\up{1}, \dots, \mc X^\up{k}$ denoting the sets of possible questions for the $k$ players, respectively.
		\item $\mc A = \mc A^\up{1} \times\dots\times \mc A^\up{k}$ is a finite product set, with $\mc A^\up{1}, \dots, \mc A^\up{k}$ denoting the sets of possible answers for the $k$ players, respectively.
		\item $\mu$ is a distribution over the set $\mc X$, with support $Q\subseteq \mc X$.
		\item $V:\mc X\times \mc A\to\set{0,1}$ is a predicate.
	\end{enumerate}
\end{definition}

The game $\mc G$ proceeds as follows: A verifier samples questions $X = (X^\up{1}, \dots, X^\up{k})\sim \mu$; then, for each $j\in [k]$, the verifier sends the question $X^\up{j}\in \mc X^\up{j}$ to the $j\textsuperscript{th}$ player, to which the player responds back with answer $A^\up{j} \in \mc A^\up{j}$.
Finally, the verifier declares that the players win if and only if $V\brac{X^\up{1},\dots,X^\up{k},A^\up{1},\dots,A^\up{k}} = 1$.

The value of the game $\mc G$, then, is the maximum possible winning probability for the $k$ players, formally defined as: 

\begin{definition}(Value of a game)
	Let $\mc G = (\mc X, \mc A, \mu, Q, V)$ be a $k$-player game.
	
	For a sequence $\brac{f^\up{j}:\mc X^\up{j}\to \mc A^\up{j}}_{j\in[k]}$ of functions, let the function $f = f^\up{1}\times\dots\times f^\up{k} : \mc X \to \mc A$ be given by $f\brac{x^\up{1},\dots,x^\up{k}} = \brac{f^\up{1}(x^\up{1}), \dots, f^\up{k}(x^\up{k})}$. 
	We use the term \emph{product functions} for functions $f$ defined in this manner, and the functions $(f^\up{j})_{j\in[k]}$ are called \emph{player strategies}.
	
	The value $\val(\mc G)$ of the game $\mc G$ is defined as 
	\[\val(\mc G) = \max_{f = f^\up{1}\times \dots\times f^\up{k}}\ \Pr_{X\sim \mu}\sqbrac{V(X,f(X))=1},\]
	where the maximum is over all product functions (or player strategies).
\end{definition}

\begin{fact}\label{fact:rand_no_help}
	The value of the game is unchanged even if we allow the player strategies $\brac{f^\up{j}}_{j\in [k]}$  to be randomized; that is, we allow the strategies to depend on some additional shared and private randomness.
	This is because there always exists an optimal fixed value of all the randomness.
\end{fact}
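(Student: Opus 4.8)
The plan is a one-line averaging argument: any source of randomness, shared or private, can be fixed to its most favorable value without decreasing the winning probability, and once it is fixed the strategies collapse to deterministic product functions, so the supremum over randomized strategies is attained by a product function. Concretely, I would set up the most general randomized model: there is a shared random string $R_0$ visible to all players, and for each $j\in[k]$ a private string $R^\up{j}$ visible only to player $j$; player $j$'s strategy is a function $g^\up{j}$ of its question $x^\up{j}$, the shared string $R_0$, and its private string $R^\up{j}$. Write $R=(R_0,R^\up{1},\dots,R^\up{k})$ for the tuple of all randomness, distributed according to some $\rho$ independent of the questions $X\sim\mu$.

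The one step that does all the work is the observation that for every fixed value $r=(r_0,r^\up{1},\dots,r^\up{k})$ of $R$, the induced strategy $f_r^\up{j}(x^\up{j}) := g^\up{j}(x^\up{j},r_0,r^\up{j})$ depends only on $x^\up{j}$, so $f_r := f_r^\up{1}\times\dots\times f_r^\up{k}$ is a product function in the sense of the definition of $\val(\mc G)$. Hence the winning probability of the randomized strategy satisfies
\[
\E_{R\sim\rho}\Pr_{X\sim\mu}\sqbrac{V(X,f_R(X))=1} \;\leq\; \max_r \Pr_{X\sim\mu}\sqbrac{V(X,f_r(X))=1} \;\leq\; \val(\mc G),
\]
where the first inequality is ``average $\leq$ maximum'' and the second holds because each $f_r$ is a product function and $\val(\mc G)$ is by definition the maximum over all product functions. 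Conversely, deterministic strategies are the special case of randomized ones in which $R$ is constant, so the randomized value is also at least $\val(\mc G)$; combining the two bounds gives equality, which is exactly the assertion that there exists an optimal fixed value of all the randomness.

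I do not anticipate a real obstacle here, since the statement is essentially definitional. The only points that require any care are (i) phrasing the randomized model generally enough to subsume both shared and private randomness simultaneously (treating $R$ as one joint random variable), and (ii) noting that conditioning on a fixed value of \emph{all} the randomness at once leaves each player's strategy a function of that player's question alone — which is precisely what makes $f_r$ a product function and lets us invoke the definition of $\val(\mc G)$.
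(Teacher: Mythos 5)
Your proposal is correct and is exactly the standard averaging argument that the paper's one-line justification ("there always exists an optimal fixed value of all the randomness") is compressing: fix all randomness $(R_0, R^{(1)},\dots,R^{(k)})$ jointly, observe the resulting strategy is a product function, and compare the average to the max. No discrepancy with the paper's approach.
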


Next, we define the parallel repetition of a $k$-player game, which corresponds to playing $n$ independent copies of the game in parallel.

\begin{definition} (Parallel Repetition of a game)
	Let $\mc G = (\mc X, \mc A, \mu, Q, V)$ be a $k$-player game. 
	We define its $n$-fold repetition to be the game $\mc G^{n} = (\mc X^{n}, \mc A^{n}, \mu^{n}, Q^{n}, V^{n})$ where:
	\begin{enumerate}
		\item The sets $\mc X^{n}, \mc A^{n}, Q^{n}$ are the $n$-fold product of the sets $\mc X, \mc A, Q$ respectively.
		\item The distribution $\mu^{n}$ is the $n$-fold product of the distribution $\mu$, satisfying $\mu^{n}(x) = \prod_{i=1}^n \mu(x_i)$ for each $x\in \mc X^{n}$.
		\item The predicate $V^{n}:\mc X^{n}\times \mc A^{n}\to \set{0,1}$ is defined as $V^{n}(x, a) = \prod_{i=1}^n V(x_i, a_i)$.
	\end{enumerate}
\end{definition}

\textbf{Notation:} we use subscripts to denote the coordinates in the parallel repetition, and superscripts to denote the players. 
That is, for any $x\in \mc X^{n}$, and subsets $S\subseteq [n], T\subseteq [k]$, we shall use $x_S^\up{T}$ to denote the questions in coordinates $S$, that the players in set $T$ receive.
For example, for $i\in[n]$ and $j\in[k]$, we will use $x_i^\up{j}$ to refer to the question to the $j\textsuperscript{th}$ player in the $i\textsuperscript{th}$ repetition of the game.
Similarly, $x_i$ will refer to the vector of questions to the $k$ players in the $i\textsuperscript{th}$ repetition, and $x^\up{j}$ will refer to the vector of questions received by the $j\textsuperscript{th}$ player over all repetitions.


\section{Forbidden Subgraphs}\label{sec:for_sub}

In this section, we describe the results of~\cite{FV02}, suitably generalized to games with more than 2 players.

Let $\mc G = (\mc X, \mc A, \mu, Q, V)$ be a $k$-player game.
The forbidden subgraph technique is a method to upper bound the value $\val(\mc G^{n})$ in terms of the combinatorial properties of the $k$-partite \emph{game hypergraph} corresponding to the set $Q$, defined as follows:

\begin{definition}(Game Hypergraph)\label{defn:game_hypergraph}
	Given a product set $\mc X = \mc X^\up{1}\times \dots \times \mc X^\up{k}$, and any set $Q\subseteq \mc X$, we associate with $Q$ the natural $k$-partite hypergraph and denote it by $H_{\mc X, Q}$.
	This has vertex set $\mc X^\up{1}\cup \dots \cup \mc X^\up{k}$, and edge set $Q$; that is, every $(x^\up{1},\dots,x^\up{k})\in Q$ forms an edge in $H_{\mc X, Q}$.
	
	We also define the (undirected) $k$-partite graph $F_{\mc X,Q}$ as the graph obtained by projecting the hypergraph $H_{\mc X, Q}$ to a graph.
	This has vertex set $\mc X^\up{1}\cup \dots \cup \mc X^\up{k}$, and contains the edges $\set{x^\up{i}, x^\up{j}}$ for each $(x^\up{1},\dots,x^\up{k})\in Q$ and each $i,j\in [k],i\not=j$.
	
	For a $k$-player game $\mc G = (\mc X, \mc A,\mu, Q, V)$, we call $H_{\mc X, Q}$ as the game hypergraph.
\end{definition}

Next, we define the central notion of a forbidden subgraph.

\begin{definition}\label{defn:forbidden_subgraph}(Forbidden Subgraph)
Let $\mc X = \mc X^\up{1}\times \dots \times \mc X^{\up{k}}$ be a product set, let $Q\subseteq \mc X$ be of size $\abs{Q}=q$, and let $n\in \N$.

A hypergraph $H\subseteq H_{\mc X^n,Q^{n}}$ is called a forbidden subgraph, if there is a coordinate $i\in [n]$ such that:
\begin{enumerate}
	\item The hypergraph $H$ is a union of $q$ edges $e(1), \dots, e(q)$ of the hypergraph $H_{\mc X^n,Q^{n}}$.
	
		Recall that each hyperedge $e(r)\in Q^n$, for $r\in [q]$, contains the questions to all $k$ players for each of the $n$ coordinates/repetitions of the game $\mc G$.
	
	\item The set of projections $\set{e(1)_i, \dots e(q)_i}$ equals $Q$.
		
		Here, for $r\in [q], i\in [n]$, we denote by $e(r)_i$ the vector of questions to the $k$ players in the $i$\textsuperscript{th} coordinate of the game, in the edge $e(r)$.
	
		Note that this implies, in particular, that the hypergraph $H$ is isomorphic to $H_{\mc X, Q}$.
	\item For each $j\in [k]$, and $r,r'\in [q]$, we have that $e(r)_i^\up{j} = e(r')_i^\up{j} \implies e(r)^\up{j} = e(r')^\up{j}$.
	
		In words, for each player $j\in [k]$, among the $q$ edges $e(1), \dots, e(q)$, the question to the $j$\textsuperscript{th} player in coordinate $i$ determines the entire vector of questions to the $j$\textsuperscript{th} player.
\end{enumerate}

We define the function $E_Q(n)$ to be the maximum density $\frac{\abs{W}}{q^n}$ of a subset $W\subseteq Q^{ n}$, such that the hypergraph $H_{\mc X^n,W}$ contains no forbidden subgraph.
\end{definition}

With the above definition in hand, we have the following proposition (justifying the term forbidden subgraph):

\begin{proposition}\label{prop:game_val_ub_forbidden}
	Let $\mc G = (\mc X, \mc A, \mu, Q, V)$ be a $k$-player game with value $\val(\mc G) < 1$, and let $n \in \N$.
	Consider an arbitrary strategy for the players for the game $\mc G^{n}$ (as in Definition~\ref{defn:multiplayer_games}), and let the set $W \subseteq  Q^{n}$ be the set of inputs on which this strategy wins.
	Then, $H_{\mc X^n,W}$ has no forbidden subgraph.
\end{proposition}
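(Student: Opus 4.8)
The plan is to prove the contrapositive: assuming $H_{\mc X^n, W}$ contains a forbidden subgraph, I will construct a winning input for the strategy that lies outside $W$, contradicting the definition of $W$ as the set of winning inputs. More precisely, suppose $H = e(1)\cup\dots\cup e(q)$ is a forbidden subgraph with special coordinate $i\in[n]$, so that all $q$ edges lie in $W$, their $i$-th coordinates enumerate exactly the set $Q$, and property (3) holds: for each player $j$, the $i$-th coordinate question $e(r)_i^\up{j}$ determines the whole question vector $e(r)^\up{j}$. The key idea is to ``splice'' these $q$ winning inputs together to manufacture a new input on which the strategy is forced to win in coordinate $i$ with a question tuple sampled from $Q$, yet the strategy's answer in that coordinate is inconsistent with a single play of $\mc G$ — which is impossible since the strategy wins on each of the $e(r)$.

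First I would set up notation: let $(f^\up{1},\dots,f^\up{k})$ be the player strategies for $\mc G^n$ corresponding to the winning set $W$, so $f^\up{j}:\mc X^\up{j,n}\to\mc A^\up{j,n}$, and for each $r\in[q]$ the full input $e(r)\in W$ means $V(e(r)_t, (f^\up{1}(e(r)^\up{1})_t,\dots,f^\up{k}(e(r)^\up{k})_t))=1$ for every coordinate $t\in[n]$, in particular for $t=i$. Now I would use property (3) to read off, for each player $j$, the answer that $f^\up{j}$ gives in coordinate $i$ as a function purely of the $i$-th question: since $e(r)_i^\up{j}$ determines $e(r)^\up{j}$, the value $f^\up{j}(e(r)^\up{j})_i$ is a well-defined function of $e(r)_i^\up{j}\in Q^\up{j}$ (the projection of $Q$ to player $j$). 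Call this function $g^\up{j}:\mc X^\up{j}\to\mc A^\up{j}$ (defined at least on the relevant questions). By property (2), as $r$ ranges over $[q]$ the tuples $(e(r)_i^\up{1},\dots,e(r)_i^\up{k})$ range over all of $Q$; and for each such tuple, because $e(r)\in W$ wins in coordinate $i$, we get $V\brac{e(r)_i,\, g^\up{1}(e(r)_i^\up{1}),\dots,g^\up{k}(e(r)_i^\up{k})}=1$. Hence $(g^\up{1},\dots,g^\up{k})$ is a single-copy strategy for $\mc G$ that wins on \emph{every} question tuple in $Q=\supp(\mu)$, giving $\val(\mc G)=1$, contradiction.

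The step I expect to be the crux is the well-definedness of $g^\up{j}$ and the verification that the $i$-th coordinate of the $\mc G^n$-predicate genuinely decouples into the single-copy predicate $V$. The decoupling of $V^n$ is immediate from the definition $V^n(x,a)=\prod_t V(x_t,a_t)$, so winning on $e(r)$ forces the $t=i$ factor to be $1$ — that part is routine. The real content is property (3): it is precisely the condition guaranteeing that the answer player $j$ produces in coordinate $i$ depends only on the coordinate-$i$ question (among the $q$ inputs under consideration), so that the spliced strategy $g^\up{j}$ is consistent. I would emphasize that without property (3) the same coordinate-$i$ question could appear in two different edges $e(r),e(r')$ with different full question vectors and hence potentially different answers, and the argument would collapse; the forbidden-subgraph definition is exactly engineered to rule this out. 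Finally, I would note that $g^\up{j}$ need only be defined on the at most $q$ questions arising as some $e(r)_i^\up{j}$, and we may extend it arbitrarily to all of $\mc X^\up{j}$; since it already wins on all of $Q$, the resulting contradiction with $\val(\mc G)<1$ stands.
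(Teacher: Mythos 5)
Your argument is correct and is essentially the paper's own proof: both derive a contradiction by using properties (2) and (3) of the forbidden subgraph to splice the coordinate-$i$ answers of the $\mc G^n$-strategy into a well-defined single-copy strategy $(g^\up{1},\dots,g^\up{k})$ that wins $\mc G$ on all of $Q$, contradicting $\val(\mc G)<1$. (Only a cosmetic note: your opening line about "constructing a winning input outside $W$" does not match what the body actually does — the body correctly constructs a perfect strategy for $\mc G$, not a new input — but the substantive argument is identical to the paper's.)
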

\begin{proof}
	Fix some strategies $\brac{f^\up{j}:(\mc X^\up{j})^{n}\to (\mc A^\up{j})^{n}}_{j\in[k]}$ for the players, and suppose, for the sake of contradiction, that $H_{\mc X^n,W}$ contains some forbidden subgraph $H\subseteq H_{\mc X^n,W}\subseteq H_{\mc X^n,Q^{n}}$.
	Let $e(1),\dots,e(q) \in W$ denote the edges of $H$, and let $i\in [n]$ be as in Definition~\ref{defn:forbidden_subgraph}.
	Using the forbidden subgraph $H$, we define a strategy for the game $\mc G$ that wins with value 1, and this will be a contradiction.
	
	The strategy for each player $j\in [k]$ is defined as follows: On input $x^\up{j}\in \mc X^\up{j}$, the player chooses an arbitrary edge $e(r)$, for $r\in [q]$, having question $x^\up{j}$ for player $j$ in coordinate $i$; that is, $e(r)^\up{j}_i = x^\up{j}$.
	Then, the player outputs the $i$\textsuperscript{th} coordinate of $f^\up{j}(e(r)^\up{j})$.
	
	First, observe that the above strategy is valid: by the second point in Definition~\ref{defn:forbidden_subgraph}, for every possible question $x^\up{j}$, asked to player $j$ with non-zero probability, there is some edge $e(r)$ with $e(r)^\up{j}_i = x^\up{j}$.
	Further, by the third point, each such edge has the same vector of questions $e(r)^\up{j} \in (\mc X^\up{j})^{n}$ for the $j$\textsuperscript{th} player, ensuring that the strategy does not depend on the arbitrary choice of the edge $e(r)$.
	
	Next, we show that this strategy wins the game $\mc G$ with probability 1. 
	For all questions $x=(x^\up{1},\dots,x^\up{k})\in Q$ to the $k$ players, by the second point in Definition~\ref{defn:forbidden_subgraph}, there exists an edge $e(r), r\in [q]$ such that $e(r)_i = x$.
	Now, since the players win on the edge $e(r)$ (by the definition of the winning set $W$), the above strategy also wins on input $x$.
\end{proof}

The above proposition immediately gives us:

\begin{corollary}\label{corr:val_ub_forbidden}
	Let $\mc G = (\mc X, \mc A, \mu, Q, V)$ be a $k$-player game with value $\val(\mc G) < 1$, and with $\mu$ the uniform distribution over $Q$.
	Then, $\val\brac{\mc G^{n}} \leq E_Q(n)$.
\end{corollary}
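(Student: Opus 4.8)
\textbf{Proof proposal for Corollary~\ref{corr:val_ub_forbidden}.}

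The plan is to derive the corollary directly from Proposition~\ref{prop:game_val_ub_forbidden} together with the definition of $E_Q(n)$ and the hypothesis that $\mu$ is uniform on $Q$. First I would fix an arbitrary strategy $\brac{f^\up{j}:(\mc X^\up{j})^n\to(\mc A^\up{j})^n}_{j\in[k]}$ for the repeated game $\mc G^n$ and let $W\subseteq Q^n$ be the set of question-tuples on which this strategy wins. By Proposition~\ref{prop:game_val_ub_forbidden}, the hypergraph $H_{\mc X^n,W}$ contains no forbidden subgraph, so $W$ is exactly one of the sets whose density $E_Q(n)$ is defined to be the maximum of; hence $\frac{\abs{W}}{q^n}\leq E_Q(n)$, where $q=\abs{Q}$.

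Next I would connect $\abs{W}/q^n$ to the winning probability of the strategy. Because $\mu$ is the uniform distribution on $Q$, its $n$-fold product $\mu^n$ is the uniform distribution on $Q^n$, which has exactly $q^n$ points. The probability that the strategy wins is $\Pr_{X\sim\mu^n}[V^n(X,f(X))=1]$, and $V^n(x,f(x))=1$ holds precisely for $x\in W$. Therefore this probability equals $\abs{W}/q^n\leq E_Q(n)$. Taking the maximum over all product strategies gives $\val(\mc G^n)\leq E_Q(n)$, as desired.

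There is essentially no obstacle here: this is a one-line bookkeeping argument once Proposition~\ref{prop:game_val_ub_forbidden} is in hand. The only point requiring a word of care is that the set $W$ lies in $Q^n$ (not in some larger set), which is automatic since the predicate $V^n$ can only be satisfied on inputs in the support $Q^n$ of $\mu^n$ — inputs outside $Q^n$ have probability zero and may be assumed to be losing, or simply excluded — and that the uniformity of $\mu$ is exactly what makes the counting density $\abs{W}/q^n$ coincide with the probabilistic value. I would also remark (mirroring Remark~\ref{remk:unif_dist}, footnoted in the introduction) that the uniformity assumption is the reason $E_Q(n)$, a purely combinatorial density quantity, upper-bounds the game value; for non-uniform $\mu$ one would instead get a weighted version.
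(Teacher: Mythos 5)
Your proof is correct and is exactly the argument the paper leaves implicit when it says the corollary "immediately" follows from Proposition~\ref{prop:game_val_ub_forbidden}: you fix a strategy, take $W$ to be its winning set, invoke the proposition to get $\abs{W}/q^n\le E_Q(n)$, and use uniformity of $\mu$ to identify this density with the winning probability. Nothing further to add.
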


\begin{remark}\label{remk:unif_dist}
If $\mu$ is not the uniform distribution, it still holds that $\val\brac{\mc G^{n}} \leq 2\cdot E_Q(\lfloor\lambda n \rfloor)$, for some constant $\lambda>0$.

Roughly speaking, we can write $\mu = \epsilon \cdot \pi + (1-\epsilon)\cdot \mu'$, where $\pi$ is the uniform distribution over $Q$, and $\epsilon>0$.
Then, to do well on $n$ independent copies of $\mc G$, one must do well on at least linearly many copies ($\epsilon n$ in expectation) of the corresponding game with the uniform distribution (see Lemma 3.14 in~\cite{GHMRZ22}).
\end{remark}

Next, we show that the above result is tight, when the game $\mc G$ is allowed to have large (depending on $n$) answer sets.

\begin{theorem}\label{thm:game_val_lb_forbidden}
	Let $\mc X = \mc X^\up{1}\times \dots \times \mc X^\up{k}$ be a product set, and let $Q\subseteq \mc X$ be such that the graph $F_{\mc X,Q}$ is connected (see Definition~\ref{defn:forbidden_subgraph}).
	
	Then, for every $n\in \N$, there exists a game $\mc G = (\mc X, \mc A, \mu, Q, V)$, with $\mu$ the uniform distribution on $Q$, and $\abs{\mc A} = n^k\cdot \abs{\mc X}^n $, such that $\val\brac{\mc G^{n}} = E_Q(n)$.
\end{theorem}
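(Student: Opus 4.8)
We already have $\val(\mathcal G^n) \le E_Q(n)$ from Corollary~\ref{corr:val_ub_forbidden} (the distribution is uniform), so the content is the matching lower bound: we must design a predicate $V$ (and an answer alphabet of the stated size) so that the players can win on a set $W \subseteq Q^n$ whenever $H_{\mathcal X^n, W}$ has no forbidden subgraph, and in fact so that the optimal strategy wins on \emph{exactly} such a maximum set. The plan is to let $W^\star \subseteq Q^n$ be a largest forbidden-subgraph-free set, so $|W^\star| = E_Q(n)\cdot q^n$, and build a game in which the honest strategy ``accepts iff the input coordinate-tuple lies in $W^\star$'' is implementable by product functions and is optimal.

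The key idea is to exploit the fact that, by Definition~\ref{defn:forbidden_subgraph}, the absence of a forbidden subgraph is exactly the obstruction that allowed the reduction in Proposition~\ref{prop:game_val_ub_forbidden} to go through; so forbidden-subgraph-freeness of $W$ is precisely the combinatorial condition that should characterize when a strategy can win on $W$. Concretely, I would have each player $j$, on receiving their $n$-fold question vector $x^{(j)} \in (\mathcal X^{(j)})^n$, output their \emph{entire} question vector $x^{(j)}$ together with a small amount of auxiliary data — this is where the answer alphabet of size $n^k \cdot |\mathcal X|^n$ comes from: roughly $|\mathcal X|^n$ to name the question vector and a factor of $n$ per player (or $n^k$ total) to name a distinguished coordinate. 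The verifier, on a given coordinate tuple $x = (x^{(1)}_i,\dots,x^{(k)}_i)$ in some repetition $i$, sees the claimed full question vectors $\tilde x^{(1)},\dots,\tilde x^{(k)}$ of all players; it first checks consistency ($\tilde x^{(j)}_i = x^{(j)}_i$, using connectedness of $F_{\mathcal X,Q}$ to propagate agreement), thereby forcing honest players to essentially reveal their true inputs, and then accepts iff the reconstructed global input vector $(\tilde x^{(1)},\dots,\tilde x^{(k)}) \in W^\star$. The point of revealing the coordinate index via the auxiliary $n$-factor is to let the verifier, when the reconstruction does \emph{not} match a single global vector, detect a discrepancy and localize it, so that any deviating strategy is effectively pinned down to behaving like ``win on $W$'' for some fixed $W$ — and forbidden-subgraph-freeness of $W$ is then forced by exactly the argument of Proposition~\ref{prop:game_val_ub_forbidden}, now run in reverse.

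For the upper-bound direction within this game (showing no strategy beats $E_Q(n)$), I would argue that for \emph{any} product strategy, the set $W$ of inputs it wins on satisfies the hypotheses needed for Proposition~\ref{prop:game_val_ub_forbidden}: if $H_{\mathcal X^n, W}$ contained a forbidden subgraph, the extracted single-copy strategy would win $\mathcal G$ with probability $1$, contradicting $\val(\mathcal G) < 1$ — and here I'd need the predicate $V$ to have been chosen so that $\val(\mathcal G) < 1$ actually holds, which should follow because $W^\star \ne Q^n$ (some set is forbidden-subgraph-free only if it omits a line-like pattern, and $Q^n$ itself contains forbidden subgraphs since $F_{\mathcal X,Q}$ is connected — this uses Proposition~\ref{prop:comb_lin_forb}-type reasoning, or can be checked directly). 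Hence $|W| \le E_Q(n)\cdot q^n$, giving $\val(\mathcal G^n)\le E_Q(n)$, and the honest ``play $W^\star$'' strategy achieves it.

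The main obstacle I anticipate is the careful design of $V$ on \emph{inconsistent} transcripts: a cheating player need not reveal a consistent question vector, and without the coordinate-index data the verifier cannot tell an honest reveal from a malicious one, so one must verify that the ``reveal your input and a coordinate'' mechanism genuinely prevents the players from jointly simulating a strategy that wins on a \emph{non}-forbidden-subgraph-free set — in other words, that the game value is not accidentally boosted above $E_Q(n)$ by dishonest encodings. Making this air-tight (choosing exactly which consistency checks to run, and confirming the answer alphabet $n^k\cdot|\mathcal X|^n$ suffices and that $\val(\mathcal G)<1$) is the delicate part; everything else is bookkeeping around Definition~\ref{defn:forbidden_subgraph} and Proposition~\ref{prop:game_val_ub_forbidden}.
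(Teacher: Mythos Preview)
Your construction is exactly the paper's: each player's answer set is $[n]\times(\mc X^\up{j})^n$, the honest strategy in $\mc G^n$ has player $j$ output $(i,x^\up{j})$ in coordinate $i$, and $V$ accepts on input $(x^\up{1},\dots,x^\up{k})$ with answers $(i^\up{1},y^\up{1}),\dots,(i^\up{k},y^\up{k})$ iff all the $i^\up{j}$ agree (say equal $i$), $(y^\up{1},\dots,y^\up{k})\in W^\star$, and $(y^\up{1}_i,\dots,y^\up{k}_i)=(x^\up{1},\dots,x^\up{k})$. Your overall plan---honest strategy for the lower bound, Corollary~\ref{corr:val_ub_forbidden} for the upper bound once $\val(\mc G)<1$---is also the paper's.

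The gap is in the step you flag as ``delicate'': your stated reason for $\val(\mc G)<1$, namely that $W^\star\neq Q^n$, is not sufficient by itself, and you have misplaced where connectivity is used. The actual argument is short and clean. Suppose some strategy wins $\mc G$ with probability $1$. Since $V$ requires all players to output the same index $i$, and since $F_{\mc X,Q}$ is connected, the index each player outputs must be constant over all inputs that occur in $Q$ (this is where connectivity enters---not in the consistency check $y^\up{j}_i=x^\up{j}$, which is just a local equality). So there is a fixed $i\in[n]$, and for each $x=(x^\up{1},\dots,x^\up{k})\in Q$ the players output vectors $y^\up{1}(x^\up{1}),\dots,y^\up{k}(x^\up{k})$ with $(y^\up{1},\dots,y^\up{k})\in W^\star$ and $(y^\up{1}_i,\dots,y^\up{k}_i)=x$. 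But then the $q$ edges $\{(y^\up{1}(x^\up{1}),\dots,y^\up{k}(x^\up{k})):x\in Q\}$ satisfy all three conditions of Definition~\ref{defn:forbidden_subgraph} at coordinate $i$: they project onto all of $Q$ in coordinate $i$, and player $j$'s full vector depends only on $x^\up{j}=y^\up{j}_i$. This is a forbidden subgraph inside $W^\star$, a contradiction.

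So the obstacle you anticipate (``dishonest encodings'' boosting the value) dissolves once you observe that \emph{any} value-$1$ strategy for $\mc G$ literally writes down a forbidden subgraph in $W^\star$; no further analysis of inconsistent transcripts is needed.
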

\begin{proof}
	Let $\mc X, Q$ be as in the statement of the theorem, and let $n\in \N$.
	Let $W\subseteq Q^{n}$ be a set of size $E_Q(n)\cdot \abs{Q}^n$, such that $H_{\mc X^n,W}$ contains no forbidden subgraph. 
	
	We define a game $\mc G = (\mc X, \mc A, \mu, Q, V)$ as follows:
	\begin{enumerate}
		\item The distribution $\mu$ is uniform over $Q$.
		\item For each $j\in [k]$, the answer set $\mc A^\up{j} = [n]\times (\mc X^\up{j})^{n}$.
			That is, any answer $a^\up{j}$ will be a tuple $(i^\up{j}, y^\up{j})$, for $i^\up{j}\in [n], y^\up{j}\in (\mc X^\up{j})^{n}$.
		\item The predicate $V$, on questions $(x^\up{1},\dots,x^\up{k})\in \mc X$, and answers $\ (i^\up{1}, y^\up{1}),\dots, (i^\up{k}, y^\up{k})$, accepts if and only if $(y^\up{1},\dots,y^\up{k})\in W$, and $i^\up{1}=i^\up{2}=\dots =i^\up{k}=i$ for some $i\in [n]$, and $(y^\up{1}_i,\dots,y^\up{k}_i) = (x^\up{1},\dots,x^\up{k})$.
	\end{enumerate}
	
	This game satisfies:
	\begin{enumerate}
		\item $\val\brac{\mc G^{n}} \geq E_Q(n)$: Consider the following strategy for the game $\mc G^{n}$.
		Each player $j\in [k]$, on input $x^\up{j}\in (\mc X^\up{j})^{n}$, outputs for each coordinate $i\in [n]$, the answer $(i,x^\up{j})$.
		By the definition of the game $\mc G$, this strategy wins on all inputs $x\in W$.
		Hence, $\val\brac{\mc G^{n}} \geq \frac{\abs{W}}{\abs{Q}^n} = E_Q(n)$.
		
		\item $\val\brac{\mc G^{n}} \leq E_Q(n)$: By Proposition~\ref{prop:game_val_ub_forbidden}, it suffices to show that $\val(\mc G)<1$.
		Suppose, for the sake of contradiction, that $\val(\mc G) = 1$, and fix a strategy for the players that achieves this.
		
		Observe that on any input $(x^\up{1},\dots,x^\up{k})\in \mc X$, the predicate $V$ requires that each player must output the same index $i\in [n]$.
		Since the graph $F_{\mc X,Q}$ (see Definition~\ref{defn:game_hypergraph}) is connected, this implies that each player outputs the same $i$ on all inputs (that occur in some question in $Q$). 
		
		Now, for any input $(x^\up{1},\dots,x^\up{k})\in Q$ we know the players answer $(i,y^\up{1}),\dots,(i,y^\up{k})$, with $(y^\up{1},\dots,y^\up{k})\in W\subseteq Q^{n}$ such that $(y^\up{1}_i,\dots,y^\up{k}_i) = (x^\up{1},\dots,x^\up{k})$.
		Then, the set of all such $(y^\up{1},\dots,y^\up{k})$, obtained by going over all possible $(x^\up{1},\dots,x^\up{k})\in Q$, forms a forbidden subgraph contained in hypergraph $H_{\mc X^n,W}$.
		This is a contradiction.
		\qedhere
	\end{enumerate}
\end{proof}

\begin{remark}
	The requirement that $F_{\mc X,Q}$ be connected is only for simplicity.
	Any game $\mc G$ with question set $Q$ is essentially a \emph{disjoint union} of different games, corresponding to the connected components of the graph $F_{\mc X,Q}$.
	Hence, modifying the definition of forbidden subgraphs to allow for different coordinates $i\in [n]$ for each such component, we can obtain the above theorem without the connectivity assumption (and still preserve Proposition~\ref{prop:game_val_ub_forbidden} and Corollary~\ref{corr:val_ub_forbidden}).
\end{remark}

\begin{remark}\label{remk:inf_length}
	In Theorem~\ref{thm:game_val_lb_forbidden}, if we allow a countably-infinite answer set $\mc A$, we can construct a single game $\mc G$ with $\val(\mc G^{n}) = E_Q(n)$ for all $n\in \N$ (note that Corollary~\ref{corr:val_ub_forbidden} holds even when $\mc A$ is infinite).
	Roughly, this is done as follows:
	
	Let $W_1,W_2,\dots$ be sets such that for each $n\in \N$, the set $W_n \subseteq Q^{n}$ has no forbidden subgraph and is of size $E_Q(n)\cdot \abs{Q}^n$.
	In the game $\mc G$, each player $j\in [k]$ will output $n^\up{j}\in \N, i^\up{j}\in [n^\up{j}], y^\up{j}\in (\mc X^\up{j})^{n}$.
	The predicate $V$, on questions $x=(x^\up{1},\dots,x^\up{k})$, and these answers, accepts if and only if the players output the same $n\in \N$ and the same $i\in [n]$, and such that $(y^\up{1},\dots,y^\up{k}) \in W_n$ and $(y^\up{1}_i,\dots,y^\up{k}_i)=x$.
\end{remark}

\section{Connections to Extremal Combinatorics}

In this section, we show several connections to extremal/additive combinatorics.

\subsection{Density Hales-Jewett Theorem}\label{sec:dhj_conn}

We recall Definition~\ref{defn:comb_line}:
\begin{definition}
\label{defn:comb_line_main}
(Combinatorial lines)
	Let $q, n\in \N$.
	A collection of $q$ points $a(1),\dots, a(q) \in [q]^n$ is said to be a combinatorial line if:
	\begin{enumerate}
		\item For all $i\in [n]$, either $(a(1)_i,\dots, a(q)_i)= (1,2,3,\dots,q)$ or $a(1)_i=a(2)_i=\dots=a(q)_i$.
		\item For some $i\in [n]$, it holds that $(a(1)_i,\dots, a(q)_i) = (1,2,3,\dots,q)$.
	\end{enumerate}
	We define $r_q(n) = \frac{\abs{A}}{q^n}$, where $A\subseteq [q]^n$ is a largest set  containing no combinatorial line.
\end{definition}

Verbitsky~\cite{Ver96} proved that for any $2$-player game $\mc G = (\mc X, \mc A, \mu, Q, V)$ with $\mu$ the uniform distribution over $Q$, and with value less than 1, it holds that $\val(\mc G^{n}) \leq r_{\abs{Q}}(n)$.
This result easily extends to all $k$-player games, and gives the current best decay bound for the value of parallel repetition of general $k$-player games with value less than 1.
As shown by~\cite{FV02}, the above result is in fact a special case of forbidden subgraph bounds, via Corollary~\ref{corr:val_ub_forbidden}, as follows:

\begin{proposition}\label{prop:comb_lin_forb}
	Let $\mc X = \mc X^\up{1}\times \dots \times \mc X^\up{k}$ be a product set, and let $Q\subseteq \mc X$ be of size $\abs{Q}=q$.
	Then, for every $n\in \N$, it holds that $E_Q(n)\leq r_q(n)$.
\end{proposition}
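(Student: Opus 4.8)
The plan is to show that any set $W \subseteq Q^n$ whose hypergraph $H_{\mc X^n, W}$ contains no forbidden subgraph maps, under a suitable coordinate-by-coordinate identification, to a subset of $[q]^n$ that contains no combinatorial line; the density bound $E_Q(n) \leq r_q(n)$ then follows immediately since the map I will use is measure-preserving (it is just a relabeling within each coordinate). So the first step is to fix, once and for all, an arbitrary bijection $\phi \colon [q] \to Q$, i.e.\ a labeling of the $q$ edges of $H_{\mc X, Q}$ by the symbols $1, \dots, q$. This induces a bijection $\Phi \colon [q]^n \to Q^n$ acting coordinatewise, $\Phi(a)_i = \phi(a_i)$, and hence a correspondence between subsets $A \subseteq [q]^n$ and subsets $W = \Phi(A) \subseteq Q^n$ with $|A| = |W|$.

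The heart of the argument is the claim: if $A \subseteq [q]^n$ contains a combinatorial line, then $H_{\mc X^n, \Phi(A)}$ contains a forbidden subgraph. Let $a(1), \dots, a(q) \in A$ be a combinatorial line with \emph{active} coordinate $i$ (one exists by the second clause of Definition~\ref{defn:comb_line_main}), so that $(a(1)_i, \dots, a(q)_i) = (1, 2, \dots, q)$, while in every other coordinate $i'$ the values $a(1)_{i'}, \dots, a(q)_{i'}$ are all equal. Set $e(r) = \Phi(a(r)) \in \Phi(A) \subseteq Q^n$ for $r \in [q]$; these are $q$ edges of $H_{\mc X^n, Q^n}$. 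I will verify the three conditions of Definition~\ref{defn:forbidden_subgraph} for this coordinate $i$. Condition~1 is immediate. For condition~2: $e(r)_i = \phi(a(r)_i) = \phi(r)$, so $\{e(1)_i, \dots, e(q)_i\} = \{\phi(1), \dots, \phi(q)\} = Q$ since $\phi$ is a bijection. For condition~3, fix a player $j \in [k]$ and indices $r, r' \in [q]$ with $e(r)_i^\up{j} = e(r')_i^\up{j}$. I need to deduce $e(r)^\up{j} = e(r')^\up{j}$, i.e.\ agreement in \emph{every} coordinate. In the active coordinate $i$, the $j$-th player's questions $e(r)_i^\up{j}$ and $e(r')_i^\up{j}$ are the $j$-th components of the distinct edges $\phi(r), \phi(r')$ of the base hypergraph $H_{\mc X, Q}$; if these $j$-components are equal, then — and here is the one point that needs a line of justification — since $a(r), a(r')$ agree on all coordinates $i' \neq i$ in the combinatorial line, we have $e(r)_{i'} = e(r')_{i'}$ for all $i' \neq i$, hence in particular $e(r)_{i'}^\up{j} = e(r')_{i'}^\up{j}$ for all $i' \neq i$, and combined with the assumed equality in coordinate $i$ we get $e(r)^\up{j} = e(r')^\up{j}$ as required. (Note condition~3 here holds for a reason stronger than strictly needed: outside coordinate $i$ the entire edges coincide.) Thus $\{e(1), \dots, e(q)\}$ spans a forbidden subgraph inside $H_{\mc X^n, \Phi(A)}$.

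Taking the contrapositive: if $H_{\mc X^n, W}$ has no forbidden subgraph, then $A = \Phi^{-1}(W)$ has no combinatorial line, so $|W| = |A| \leq r_q(n) \cdot q^n$. Taking the maximum over all such $W$ gives $E_Q(n) \cdot q^n \leq r_q(n) \cdot q^n$, i.e.\ $E_Q(n) \leq r_q(n)$. I do not expect a serious obstacle here — the argument is essentially a dictionary translation between the two definitions — but the step requiring the most care is verifying condition~3 of the forbidden-subgraph definition, where one must be careful that the combinatorial-line structure forces the two edges to agree on \emph{all} non-active coordinates, not merely on the active one; conversely one should double-check that nothing is needed about the inactive coordinates of a combinatorial line beyond the "all equal" property, which is exactly what clause~1 of Definition~\ref{defn:comb_line_main} provides.
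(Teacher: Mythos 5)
Your proposal follows the same route as the paper, which presents the translation as an immediate observation after identifying $Q$ with $[q]$; you are supplying the details the paper omits. However, there is a flaw in your verification of condition~3. You assert that "in every other coordinate $i'$ the values $a(1)_{i'}, \dots, a(q)_{i'}$ are all equal," and later, more strongly, that "outside coordinate $i$ the entire edges coincide." Neither is true in general: clause~1 of Definition~\ref{defn:comb_line_main} permits \emph{many} coordinates with the pattern $(a(1)_{i'},\dots,a(q)_{i'}) = (1,2,\dots,q)$, i.e.\ a combinatorial line typically has several active (wildcard) coordinates. After you fix one of them to be $i$, in any \emph{other} active coordinate $i'$ and any $r \neq r'$ the edges actually disagree: $e(r)_{i'} = \phi(r) \neq \phi(r') = e(r')_{i'}$.

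Fortunately, the conclusion you want is still within reach; the fix is to split on whether $i'$ is active. If $i'$ is inactive, your reasoning applies and gives $e(r)_{i'} = e(r')_{i'}$. If $i'$ is active, then $e(r)_{i'}^\up{j} = \phi(r)^\up{j}$ and $e(r')_{i'}^\up{j} = \phi(r')^\up{j}$; the hypothesis $e(r)_i^\up{j} = e(r')_i^\up{j}$ is precisely $\phi(r)^\up{j} = \phi(r')^\up{j}$, so again $e(r)_{i'}^\up{j} = e(r')_{i'}^\up{j}$. In both cases the $j$-th player's questions agree, so condition~3 holds, but not for the reason you gave: in general the edges do \emph{not} coincide outside coordinate $i$, only the $j$-th player's components do (under the hypothesis). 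Your parenthetical remark to the contrary should be deleted. The remainder of the argument (conditions~1--2, the measure-preserving relabeling, the contrapositive) is correct.
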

\begin{proof}
	By identifying $Q$ with $[q]$, it follows easily from Definition~\ref{defn:forbidden_subgraph} and Definition~\ref{defn:comb_line_main} that any combinatorial line in $[q]^n$ corresponds to a forbidden subgraph in $H_{\mc X^n,Q^{n}}$.
\end{proof}

H{\k a}z{\l}a, Holenstein and Rao~\cite{HHR16} show a converse to the above result, by presenting games (one for each $n$) that achieve equality in the above proposition:

\begin{proposition}
	Let $q\geq 3$ be a positive integer, $\mc X = \set{0,1}^q$, and let \[Q = \set{(1,0,0,\dots,0), (0,1,0,\dots,0), \dots,(0,0,\dots,0,1)} \subseteq \set{0,1}^q\] be the set of size $q$ containing all vectors with 1 one and $(q-1)$ zeros.
	
	Then, for every $n\in \N$, it holds that $E_{Q}(n) = r_q(n)$.
\end{proposition}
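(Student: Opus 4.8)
The plan is to establish the two inequalities $E_Q(n) \le r_q(n)$ and $E_Q(n) \ge r_q(n)$ separately. The first direction is already handed to us by Proposition~\ref{prop:comb_lin_forb}, which applies verbatim since $|Q| = q$; so the entire content of the statement is the reverse inequality $E_Q(n) \ge r_q(n)$. To prove it, I would exhibit, for any set $A \subseteq [q]^n$ containing no combinatorial line, a set $W \subseteq Q^n$ of the same density such that $H_{\mc X^n, W}$ contains no forbidden subgraph. Identify $Q$ with $[q]$ via the correspondence $(0,\dots,1,\dots,0) \mapsto$ (the index of the $1$), i.e.\ player-$j$'s coordinate of the question records whether the "value" at that coordinate is $j$. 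Then a point of $Q^n$ is exactly a point of $[q]^n$, and we set $W$ to be the copy of $A$ under this identification, so that the density of $W$ in $Q^n$ equals $|A|/q^n$.

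The heart of the argument is then to show that, under this identification, every forbidden subgraph in $H_{\mc X^n, Q^n}$ is precisely a combinatorial line in $[q]^n$; since $A$ has no combinatorial line, $W$ has no forbidden subgraph, giving $E_Q(n) \ge |A|/q^n$ and hence (taking $A$ extremal) $E_Q(n) \ge r_q(n)$. Concretely, suppose $e(1),\dots,e(q) \in Q^n$ together with a coordinate $i \in [n]$ form a forbidden subgraph. Condition (2) of Definition~\ref{defn:forbidden_subgraph} says $\{e(1)_i,\dots,e(q)_i\} = Q$, so after relabeling, $e(r)_i$ is the question vector corresponding to value $r$ — i.e.\ in $[q]^n$-coordinates, $(a(1)_i,\dots,a(q)_i) = (1,2,\dots,q)$. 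Now fix any other coordinate $i' \ne i$. For each player $j \in [k]$, condition (3) says that $e(r)_i^{\up{j}} = e(r')_i^{\up{j}} \implies e(r)^{\up{j}} = e(r')^{\up{j}}$; but for our question set, the $j$th coordinate of $e(r)_i$ is $1$ exactly when $r = j$ (for $j \le q$) and is $0$ for all $r$ when $j > q$. So for $j > q$ the hypothesis of (3) holds for every pair $r,r'$, forcing $e(1)^{\up{j}} = \dots = e(q)^{\up{j}}$; in particular their $i'$th entries agree. For $j \le q$, the only constraint is vacuous at coordinate $i'$ unless we use it together with what we've just derived. The clean way to finish: I claim these constraints force, at coordinate $i'$, that $(a(1)_{i'},\dots,a(q)_{i'})$ is either $(1,2,\dots,q)$ or constant. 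Indeed, $a(r)_{i'}$ is the unique $j \le q$ with $e(r)_{i'}^{\up{j}} = 1$ (and such a $j$ exists and is unique since $e(r)_{i'} \in Q$). Using the agreement of the "$j>q$ players" across all $r$ one shows the map $r \mapsto a(r)_{i'}$ is either injective (hence a permutation, and then one checks it must be the identity permutation $(1,\dots,q)$ because of how the players $j \le q$ interact with condition (3) at coordinate $i$) or constant. Then the existence of coordinate $i$ where the pattern is $(1,\dots,q)$ gives condition (2) of Definition~\ref{defn:comb_line_main}, and we have a combinatorial line. Conversely every combinatorial line gives a forbidden subgraph, which is Proposition~\ref{prop:comb_lin_forb}, so the two notions coincide.

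The step I expect to be the main obstacle is the careful verification, at a generic coordinate $i' \ne i$, that condition (3) of the forbidden-subgraph definition — which only directly constrains entire player-rows via their value at the single coordinate $i$ — genuinely forces the dichotomy "identity pattern or constant pattern" at $i'$, rather than allowing some other permutation of $[q]$. The subtlety is that condition (3) is a statement about equality of whole vectors $e(r)^{\up{j}}$, so one has to chase how equality of rows propagates across coordinates, and in particular rule out that, say, coordinate $i'$ shows the pattern $(2,1,3,4,\dots,q)$ while still being consistent with $e(r)_i$ showing $(1,2,\dots,q)$. Resolving this should come down to the observation that the $q$ edges are distinguished from one another only through the "value" players $j \in [q]$ — since the $j > q$ players have identical rows — and tracking that single-coordinate-$i$ labeling forward; I'd want to write this out as an explicit case analysis on whether $r \mapsto a(r)_{i'}$ is constant or not. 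Everything else (the density bookkeeping, the identification $Q \cong [q]$, quoting Proposition~\ref{prop:comb_lin_forb} for one direction) is routine.
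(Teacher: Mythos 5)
Your overall strategy matches the paper's: identify $Q$ with $[q]$, quote Proposition~\ref{prop:comb_lin_forb} for $E_Q(n)\le r_q(n)$, and prove the reverse inequality by showing that every forbidden subgraph of $H_{\mathcal X^n,Q^n}$ is in fact a combinatorial line. But there is a genuine gap exactly where you flag it, and also a structural confusion that would prevent your sketch from closing.

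First, the structural confusion: you repeatedly refer to ``players $j>q$'' whose rows are forced to agree, and you lean on this to argue the map $r\mapsto a(r)_{i'}$ is either injective or constant. In this proposition the game has exactly $q$ players --- $\mathcal X=\{0,1\}^q$ and each $\mathcal X^{(j)}=\{0,1\}$ --- so the set of players with $j>q$ is empty, and the agreement you invoke is vacuous. It cannot carry any weight in the argument.

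Second, the real content of the proposition is precisely the dichotomy you leave as ``the main obstacle'': at any coordinate $i'\ne i$, the pattern $(a(1)_{i'},\dots,a(q)_{i'})$ must be either $(1,2,\dots,q)$ or constant, and not some other permutation. Your sketch asserts this ``should come down to a case analysis'' but does not execute it; as written the proof would not compile. The paper closes it as follows. Condition (3) of Definition~\ref{defn:forbidden_subgraph} lets one define, for each player $j\in[q]$, two vectors $x^{(j)},y^{(j)}$ with $x^{(j)}_i=0,\ y^{(j)}_i=1$, so that (after relabeling) $e(r)=(x^{(1)},\dots,x^{(r-1)},y^{(r)},x^{(r+1)},\dots,x^{(q)})$. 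Set $a^{(j)}=x^{(j)}_{i'},\ b^{(j)}=y^{(j)}_{i'}\in\{0,1\}$; then $e(r)_{i'}=(a^{(1)},\dots,a^{(r-1)},b^{(r)},a^{(r+1)},\dots,a^{(q)})$, and crucially each $e(r)_{i'}$ lies in $Q$, i.e.\ is a unit vector. If some $a^{(r_0)}=1$ (WLOG $r_0=1$), then for every $r>1$, position $1$ of $e(r)_{i'}$ is already $1$, forcing all other entries to be $0$; ranging over $r>1$ this gives $a^{(j)}=b^{(j)}=0$ for all $j\ge 2$, and it is here that $q\ge3$ is essential (one needs, for each $r\ge 2$, some $r'\notin\{1,r\}$ to conclude $a^{(r)}=0$). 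Then $e(1)_{i'}=(b^{(1)},0,\dots,0)\in Q$ forces $b^{(1)}=1$, so the pattern is constant $(1,0,\dots,0)$. Otherwise all $a^{(j)}=0$, and $e(r)_{i'}\in Q$ forces $b^{(r)}=1$ for every $r$, giving the identity pattern. Your sketch never uses the unit-vector structure of $Q$ at coordinate $i'$, nor the hypothesis $q\ge3$, and both are indispensable to the dichotomy.
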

\begin{proof}
	Let $q\geq 3$ be a positive integer, and let $n\in \N$.
	We will show that any forbidden subgraph of $H_{\mc X^n,Q^{n}}$ corresponds to a combinatorial line.
	Combined with Proposition~\ref{prop:comb_lin_forb}, this implies the desired result.
	
	Consider any forbidden subgraph in $H_{\mc X^n,Q^{n}}$, given by edges $e(1),\dots,e(q)$, and let $i\in [n]$ be the coordinate as in Definition~\ref{defn:forbidden_subgraph}.
	Identifying $[q]$ with $Q$ (say for example under the map $r\mapsto (0,\dots,0,1,0,\dots,0)$ with 1 in the $r$\textsuperscript{th} position, for all $r\in [q]$), we can view each $e(1),\dots,e(q)$ as an element of $[q]^n$.
	Next, we show that these form a combinatorial line.
	
	By the second and third points in Definition~\ref{defn:forbidden_subgraph}, we can define for each $j\in [q]$ questions $x^\up{j},y^\up{j} \in (\mc X^\up{j})^{n}$, such that $x^\up{j}_i= 0,\ y^\up{j}_i=1$; the questions $x^\up{j}$ (resp. $y^\up{j}$) will correspond to the input of the $j$\textsuperscript{th} player, among the edges of the forbidden subgraph, when the input in the $i$\textsuperscript{th} coordinate is $0$ (resp. 1).
	Then, after reordering if needed, we have that for each $r\in [q]$, $e(r) = \brac{x^\up{1},\dots,x^\up{r-1}, y^\up{r},x^\up{r+1}, \dots,x^\up{q}} \in Q^{n}$.
	
	With the above, we check that $e(1),\dots,e(q)$ form a combinatorial line.
	For any coordinate $i'\in [n]$ consider the projection of these edges on coordinate $i'$, given by $e(1)_{i'},\dots,e(q)_{i'}$:
	\begin{enumerate}
		\item In coordinate $i'=i$, we have that $(e(1)_{i},\dots,e(q)_{i})$ corresponds to the tuple $(1,2,\dots,q)$, under the mapping between $Q$ and $[q]$.
		\item Suppose $i'\not=i$. For each $j\in [k]$, let $a^\up{j} = x^\up{j}_{i'} \in \set{0,1}$ and $b^\up{j} = y^\up{j}_{i'} \in \set{0,1}$.
		Then, for each $r\in [k]$, we have $e(r)_{i'} = (a^\up{1},\dots,a^\up{r-1},b^\up{r},a^\up{r+1},\dots,a^\up{q}) \in Q\subseteq \set{0,1}^k$.
		Consider the following cases:
		\begin{itemize}
			\item Suppose $a^\up{r} = 1$ for some $r\in [q]$. By symmetry, we may assume $a^\up{1}=1$.
			Using that $e(r)_{i'}\in Q$ for each $r>1$, we get that $a^\up{2}=b^\up{2}=\dots=a^\up{q}=b^\up{q}=0$; note that this step uses $q\geq 3$.
			Now, using the same for $r=1$, we get that $b^\up{1}=1$.
			Hence, for each $r\in [q]$, we have $e(r)_{i'} = (1,0,0,\dots,0)$.
			
			\item If $a^\up{r}=0$ for each $r\in [q]$, using that $e(r)_{i'}\in Q$, we get $b^\up{r}=1$, for each $r\in [q]$.
				In this case, $(e(1)_{i'},\dots,e(k)_{i'})$ corresponds to the tuple $(1,2,\dots,q)$, under the mapping between $Q$ and $[q]$.
			\qedhere
		\end{itemize}
	\end{enumerate}
\end{proof}

For the set $Q$ as in the above proposition, it is easily checked that the graph $F_{\mc X,Q}$ (see Definition~\ref{defn:forbidden_subgraph}) is connected.
Then, by Theorem~\ref{thm:game_val_lb_forbidden}, we get the following corollary.

\begin{corollary}\label{corr:hj_inv}
	Let $q\geq 3$ be a positive integer, $\mc X = \set{0,1}^q$, and let \[Q = \set{(1,0,0,\dots,0), (0,1,0,\dots,0), \dots,(0,0,\dots,0,1)} \subseteq \set{0,1}^q\] be the set of size $q$ containing all vectors with 1 one and $(q-1)$ zeros.
	Then, for every $n\in \N$, there exists a $q$-player game $\mc G = (\mc X, \mc A, \mu, Q, V)$, with $\mu$ the uniform distribution on $Q$, and such that $\val\brac{\mc G^{n}} = r_q(n)$.
\end{corollary}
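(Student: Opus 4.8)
The plan is to deduce the corollary by combining the Proposition immediately above it (which identifies $E_Q(n)$ with $r_q(n)$ for this specific $Q$ of indicator vectors) with Theorem~\ref{thm:game_val_lb_forbidden} (which, whenever $F_{\mc X,Q}$ is connected, manufactures for each $n$ a game over question set $Q$ whose $n$-fold repetition has value exactly $E_Q(n)$). Concretely: once connectivity of $F_{\mc X,Q}$ is established, Theorem~\ref{thm:game_val_lb_forbidden} yields, for each $n\in\N$, a $q$-player game $\mc G=(\mc X,\mc A,\mu,Q,V)$ with $\mu$ uniform on $Q$, with finite answer set (of size $\abs{\mc A}=n^q\cdot\abs{\mc X}^n=n^q\cdot 2^{qn}$), and with $\val(\mc G^n)=E_Q(n)$; substituting $E_Q(n)=r_q(n)$ from the Proposition then gives $\val(\mc G^n)=r_q(n)$, which is exactly the claim. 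So the entire substance of the argument reduces to checking the hypothesis of Theorem~\ref{thm:game_val_lb_forbidden}.

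First I would verify that $F_{\mc X,Q}$ is connected. Here the number of players is $k=q$ and each part is $\mc X^{\up{j}}=\set{0,1}$, so the vertex set of $F_{\mc X,Q}$ consists of the $2q$ symbols $0_1,1_1,\dots,0_q,1_q$ (two per part). The edge $e_r=(0,\dots,0,1,0,\dots,0)\in Q$, with the $1$ in position $r$, contributes the edges $\set{1_r,0_j}$ for all $j\neq r$ together with the edges $\set{0_i,0_j}$ for all distinct $i,j\neq r$. Taking the union over $r\in[q]$ and using $q\geq 3$: for any two parts $i\neq j$ there is an index $r\notin\set{i,j}$, so $\set{0_i,0_j}$ is an edge, whence $0_1,\dots,0_q$ all lie in a single component (in fact a clique); and each $1_r$ is adjacent to some $0_j$ with $j\neq r$, hence lies in that same component. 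Thus $F_{\mc X,Q}$ is connected. This is precisely where $q\geq 3$ is used: for $q=2$ the graph splits into the two components $\set{1_1,0_2}$ and $\set{0_1,1_2}$, consistent with the corollary being stated only for $q\geq 3$.

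I do not expect any real obstacle: all the work has already been carried out in Theorem~\ref{thm:game_val_lb_forbidden} and in the preceding Proposition, and the only step demanding any care is the connectivity verification above. As an alternative one could bypass even that by appealing to the remark following Theorem~\ref{thm:game_val_lb_forbidden}, which removes the connectivity assumption at the cost of a slightly more flexible notion of forbidden subgraph (allowing a separate coordinate per connected component of $F_{\mc X,Q}$); but this is unnecessary for $q\geq 3$.
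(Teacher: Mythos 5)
Your proposal is correct and follows exactly the paper's route: check that $F_{\mc X,Q}$ is connected, invoke Theorem~\ref{thm:game_val_lb_forbidden} to get a game with $\val(\mc G^n)=E_Q(n)$, and substitute $E_Q(n)=r_q(n)$ from the preceding proposition. The only difference is that you spell out the connectivity verification (correctly, including the role of $q\geq 3$) where the paper merely asserts it is ``easily checked.''
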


\begin{remark}
	We note that for $q=2$, as noted earlier, we have $r_2(n) = \Theta\brac{\frac{1}{\sqrt{n}}}$.
	However, for any $k\in \N$, $\mc X = \mc X^\up{1}\times \dots \times \mc X^\up{k}$, and $Q \subseteq \mc X$ of size $\norm{Q}=2$, it is easily seen that $E_Q(n) \leq \frac{1}{2^n}$, since any two distinct inputs in $Q^n$ form a forbidden subgraph.
	
	For $\norm{Q}=1$ and $q=1$, we always have $E_Q(n)=r_1(n)=0$ trivially. 
\end{remark}


\subsection{Square-Free sets over \texorpdfstring{$\F_2^n$}{F\_2\^{}n}}\label{sec:square_conn}

We recall Definition~\ref{defn:corner_sq}:

\begin{definition} (Square free sets)
	Let $n\in \N$.
	A square in $\F_2^n\times \F_2^n$ is a set of the form \[\set{(x,y), (x+d,y), (x,y+d), (x+d,y+d)},\] where $x,y,d\in  \F_2^n,\ d\not=0$.
			
			Define $r_{\square}(\F_2^n)$ to be the maximum density $\frac{\abs{A}}{4^n}$ of a set $A\subseteq \F_2^n\times \F_2^n$ containing no square.
\end{definition}

We show that the question set of the GHZ game exactly captures this problem.

\begin{proposition}
	Let $\mc X = \F_2^3$, and let $Q = \set{(x,y,z)\in \F_2^3 : x+y+z=0} \subseteq \mc X$.
	Then, for every $n\in \N$, $E_Q(n) = r_{\square}(\F_2^n)$.
\end{proposition}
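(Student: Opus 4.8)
The plan is to establish the equality $E_Q(n) = r_{\square}(\F_2^n)$ by exhibiting a density-preserving correspondence between winning sets $W \subseteq Q^n$ with no forbidden subgraph and subsets $A \subseteq \F_2^n \times \F_2^n$ with no square. The first step is to set up the right dictionary. The edge set $Q = \set{(x,y,z) \in \F_2^3 : x+y+z=0}$ has size $4$, and its four elements are determined by the first two coordinates $(x,y)$, since $z = x+y$ is forced; so I would identify $Q \cong \F_2 \times \F_2$ via $(x,y,z) \mapsto (x,y)$. Iterating coordinatewise, this gives a bijection $Q^n \cong \F_2^n \times \F_2^n$, under which a subset $W \subseteq Q^n$ corresponds to a subset $A \subseteq \F_2^n \times \F_2^n$ of the same density $\abs{W}/4^n = \abs{A}/4^n$. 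It remains to show that $H_{\mc X^n, W}$ contains a forbidden subgraph if and only if $A$ contains a square, which together with Corollary~\ref{corr:val_ub_forbidden}-style density bounds yields $E_Q(n) = r_{\square}(\F_2^n)$.

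Next I would unpack what a forbidden subgraph looks like for this $Q$. By Definition~\ref{defn:forbidden_subgraph}, a forbidden subgraph consists of $q = 4$ edges $e(1),\dots,e(4) \in W$ and a coordinate $i \in [n]$ such that the projections $\set{e(1)_i,\dots,e(4)_i}$ equal all of $Q$, and for each player $j$ the $i$-th question determines the whole question vector $e(r)^\up{j}$. Writing player $1$'s question as an element $u^\up{r} \in \F_2^n$, player $2$'s as $v^\up{r} \in \F_2^n$, and player $3$'s as $u^\up{r} + v^\up{r}$, the third condition says: the value $u^\up{r}_i$ determines $u^\up{r}$, the value $v^\up{r}_i$ determines $v^\up{r}$, and the value $u^\up{r}_i + v^\up{r}_i$ determines $u^\up{r} + v^\up{r}$. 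Since $\set{(u^\up{r}_i, v^\up{r}_i) : r \in [4]} = \F_2 \times \F_2$, there are only two distinct values $a, a'$ taken by player $1$'s $i$-th coordinate (namely $0$ and $1$), hence only two distinct question vectors for player $1$, say $x$ (when $u_i = 0$) and $x + d$ for some $d \in \F_2^n$ (when $u_i = 1$); note $d \neq 0$ precisely because $x \neq x+d$, which is forced since $x_i \neq (x+d)_i$. Similarly player $2$ has exactly two question vectors $y$ and $y + d'$; and the condition on player $3$ forces $d' = d$ (since player $3$'s two vectors are $x+y$ and, from the $(1,1)$ edge, $(x+d)+(y+d')$, and these must be distinct and determined by the single bit $u_i + v_i$, which collapses the four combinations to two — this is where I expect to need a short case check, analogous to the $q \geq 3$ argument in the combinatorial-lines proposition, and it is the step I'd flag as the most delicate). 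Granting $d = d'$, the four edges are exactly $(x,y)$, $(x+d,y)$, $(x,y+d)$, $(x+d,y+d)$ under the identification — a square in $\F_2^n \times \F_2^n$ with $d \neq 0$.

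For the converse direction, given a square $\set{(x,y),(x+d,y),(x,y+d),(x+d,y+d)} \subseteq A$ with $d \neq 0$, pick any coordinate $i$ with $d_i = 1$ (such $i$ exists since $d \neq 0$). Translating back through the bijection $Q^n \cong \F_2^n \times \F_2^n$, the four corresponding elements of $W$ satisfy: in coordinate $i$, player $1$'s questions are $x_i$ and $x_i + 1$, player $2$'s are $y_i$ and $y_i + 1$, so the projections onto coordinate $i$ hit all of $Q$; and since $x$ versus $x+d$ differ in coordinate $i$ (as $d_i = 1$), the $i$-th coordinate of player $1$'s question determines whether the vector is $x$ or $x+d$, and likewise for players $2$ and $3$ — so condition (3) of Definition~\ref{defn:forbidden_subgraph} holds. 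Hence these four edges form a forbidden subgraph inside $H_{\mc X^n, W}$. Combining the two directions: $H_{\mc X^n, W}$ is forbidden-subgraph-free iff $A$ is square-free, and since the correspondence preserves density, taking the maximum over all such $W$ (equivalently all such $A$) gives $E_Q(n) = r_{\square}(\F_2^n)$, completing the proof.
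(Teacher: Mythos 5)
Your proposal is correct and takes essentially the same approach as the paper: both use the identification $Q \cong \F_2 \times \F_2$ (so $Q^n \cong \F_2^n \times \F_2^n$) and show that forbidden subgraphs correspond exactly to squares by unwinding conditions (2) and (3) of Definition~\ref{defn:forbidden_subgraph} together with the linear constraint $x+y+z=0$. The step you flag as delicate is in fact a one-liner requiring no case analysis: the edges indexed by $(0,0)$ and $(1,1)$ both have $u_i+v_i = 0$, so condition (3) for player $3$ forces $x+y = (x+d)+(y+d')$, hence $d = d'$ (and $d \neq 0$ since $d_i = 1$).
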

\begin{proof}
	Consider any $n\in \N$.
	
	First, we show that $E_Q(n) \leq r_{\square}(\F_2^n)$.
	Let $W\subseteq Q^{n}$ be such that $H_{\mc X^n,W}$ has no forbidden-subgraph.
	It suffices to show that $A =\set{(x,y):(x,y,x+y)\in W} \subseteq \F_2^n\times \F_2^n$ is a square free set, since $\frac{\abs{A}}{4^n} = \frac{\abs{W}}{\abs{Q}^n}$.
	Suppose, for the sake of contradiction, that $A$ contains a square.
	Then, there exists $x,y,d\in \F_2^n, d\not=0$ such that \[\set{(x,y,x+y), (x+d,y,x+y+d), (x,y+d,x+y+d), (x+d,y+d,x+y)}\subseteq W.\]
	This, by Definition~\ref{defn:forbidden_subgraph}, is a forbidden subgraph, choosing $i\in [n]$ to be any coordinate with $d_i\not=0$.
	
	Next, we show that $E_Q(n) \geq r_{\square}(\F_2^n)$.
	Let $A \subseteq \F_2^n\times \F_2^n$ be a square-free set.
	It suffices to show that $W = \set{(x,y,x+y):(x,y)\in A}\subseteq Q^{n}$ is such that $H_{\mc X^n,W}$ has no forbidden subgraph.
	Suppose, for the sake of contradiction, that $H_{\mc X^n,W}$ contains a forbidden subgraph.
	By the second and third points in Definition~\ref{defn:forbidden_subgraph}, we can find $x,y,z, x',y',z'\in \F_2^n$, and $i\in [n]$, such that $x_i=y_i=z_i=0$ and $x'_i=y'_i=z'_i=1$, and such that the edges of the forbidden subgraph are \[\set{(x,y,z), (x,y',z'), (x',y,z'), (x',y',z)} \subseteq W.\]
	Then, we have $x+y+z = x+y'+z' = x'+y+z'= x'+y'+z=0$.
	This implies
	\[ x+x'=y+y'=z+z' := d,\]
	with $d\not=0$ (since $d_i=1$), and such that $z=x+y$.
	Substituting this into the above expression, we get
	\[\set{(x,y,x+y), (x+d,y,x+y+d), (x,y+d,x+y+d), (x+d,y+d,x+y)}\subseteq W,\]
	or, equivalently,
	\[\set{(x,y), (x+d,y), (x,y+d), (x+d,y+d)}\subseteq A,\]
	which is a contradiction.
\end{proof}

For the set $Q$ as in the above proposition, it is easily checked that the graph $F_{\mc X,Q}$ (see Definition~\ref{defn:forbidden_subgraph}) is the complete $3$-partite graph, and hence connected.
Then, by Theorem~\ref{thm:game_val_lb_forbidden}, we get the following corollary:

\begin{corollary}\label{corr:square}
	Let $\mc X = \F_2^3$, and let $Q = \set{(x,y,z)\in \F_2^3 : x+y+z=0} \subseteq \mc X$.
	Then, for every $n\in \N$, there exists a $3$-player game $\mc G = (\mc X, \mc A, \mu, Q, V)$, with $\mu$ the uniform distribution on $Q$, and such that $\val(\mc G^{n}) = r_{\square}(\F_2^n)$.
\end{corollary}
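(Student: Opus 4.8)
The plan is to obtain this directly by combining the Proposition immediately above (which establishes $E_Q(n) = r_{\square}(\F_2^n)$ for all $n$) with Theorem~\ref{thm:game_val_lb_forbidden}. First I would check the single hypothesis of Theorem~\ref{thm:game_val_lb_forbidden} that is not automatic, namely that the graph $F_{\mc X,Q}$ is connected, for $\mc X = \F_2^3$ and $Q = \set{(x,y,z)\in\F_2^3 : x+y+z=0} = \set{(0,0,0),(1,1,0),(1,0,1),(0,1,1)}$. Writing out, for each of the four elements of $Q$ and each pair of players $i\neq j$, the edge $\set{x^\up{i},x^\up{j}}$ it contributes to $F_{\mc X,Q}$, one sees that every pair of the three parts $\set{0,1}$ is joined by all four possible edges; that is, $F_{\mc X,Q}$ is the complete tripartite graph on parts of size $2$, which is connected.

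Given connectivity, Theorem~\ref{thm:game_val_lb_forbidden} (applied with $k=3$) yields, for each $n\in\N$, a $3$-player game $\mc G = (\mc X,\mc A,\mu,Q,V)$ with $\mu$ uniform on $Q$ (and a finite answer alphabet, $\abs{\mc A} = n^3\cdot 8^n$) such that $\val(\mc G^n) = E_Q(n)$. The preceding Proposition gives $E_Q(n) = r_{\square}(\F_2^n)$ for every $n\in\N$. Chaining these two equalities produces $\val(\mc G^n) = r_{\square}(\F_2^n)$, which is precisely the assertion of the Corollary.

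Since the two substantive ingredients — the combinatorial identity $E_Q(n) = r_{\square}(\F_2^n)$ and the tightness statement Theorem~\ref{thm:game_val_lb_forbidden} — have already been proved, there is no genuine obstacle in this step: the only thing to verify is the connectivity of $F_{\mc X,Q}$, which is immediate from the explicit description of $Q$. If instead one wants a \emph{single} game $\mc G$ achieving $\val(\mc G^n) = r_{\square}(\F_2^n)$ for all $n$ simultaneously, one would invoke Remark~\ref{remk:inf_length} in place of Theorem~\ref{thm:game_val_lb_forbidden}, at the cost of allowing a countably infinite answer alphabet.
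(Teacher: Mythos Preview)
Your proposal is correct and matches the paper's own proof essentially verbatim: the paper also notes that $F_{\mc X,Q}$ is the complete $3$-partite graph (hence connected) and then invokes Theorem~\ref{thm:game_val_lb_forbidden} together with the preceding Proposition.
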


\begin{remark}\label{rmk:square_in_par_rep}
	In the above proof, we see that configurations of the form \[\set{(x,y,x+y), (x+d,y,x+y+d), (x,y+d,x+y+d), (x+d,y+d,x+y)},\]
	with $x,y,d\in \F_2^n, d\not=0$
	are forbidden subgraphs for the question set $Q^{n}$ of the $n$-fold GHZ game.
	This was first observed by~\cite{GHMRZ21}, who called these configurations as bow-ties, and used these to prove polynomial decay bounds for parallel repetition of all games with question set as $Q$, and with value less than 1 (and with bounded answer lengths).
\end{remark}


\subsection{Grid-Free sets over \texorpdfstring{$\F^n$}{F\^{}n}}\label{sec:grids}

Throughout this section, we shall work over a fixed finite field $\F$.
We show that the results of the previous section extend to the more general setting of sets free of $k$-dimensional grids over $\F^{nk}$.
We recall Definition~\ref{defn:grid_free}:

\begin{definition}(Grid free sets)
Let $k,n\in \N$.
	
	A grid in $(\F^n)^k$ is a set of the form $\set{x + \alpha\otimes d : \alpha\in \F^k}$ for some $x\in (\F^n)^k$,\ $d\in \F^n,\ d\not=0$.
	We use the Kronecker product notation $\alpha\otimes d = (\alpha^\up{1}d,\dots,\alpha^\up{k}d) \in (\F^{n})^k,$ for $\alpha\in \F^k,\ d\in \F^n$.
	
	Define $r_{grid}(\F,k,n)$ as the maximum density $\frac{\abs{A}}{\abs{\F}^{kn}}$ of a set $A\subseteq (\F^n)^k$ containing no grid.
\end{definition}
Observe that any combinatorial line in $[\abs{\F}^k]^n$ corresponds to a grid in $\F^{kn}$, and hence $r_{grid}(\F,k,n) \leq r_{\abs{\F}^k}(n) = o_n(1)$.

Next, we show that grid-free sets correspond to parallel repetition of games with query distribution as linear/affine subspaces.
Since grids, in some sense, represent the largest linear forbidden structures in $(\F^n)^k$,  understanding parallel repetition may give a way to attack this large class of problems in extremal/additive combinatorics.

\begin{proposition}\label{prop:lin_game_grid_free}
	Let $k,\ell\in \N$, $\mc X = \F^{\ell}$, and let $Q\subseteq \F^\ell$ be an affine subspace of dimension $k \leq \ell$.
	Then, $E_Q(n) \leq r_{grid}(\F,k,n)$.
\end{proposition}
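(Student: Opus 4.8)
The plan is to exhibit an explicit bijection $\Psi\colon Q^n \to (\F^n)^k$ under which every forbidden-subgraph-free subset of $Q^n$ is carried to a grid-free subset of $(\F^n)^k$. Since $\abs{Q} = \abs{\F}^k$, and hence $\abs{Q}^n = \abs{\F}^{kn}$, such a $\Psi$ preserves densities, so applying it to a forbidden-subgraph-free $W \subseteq Q^n$ of density $E_Q(n)$ yields a grid-free set of the same density, which gives $E_Q(n) \le r_{grid}(\F,k,n)$. To define $\Psi$, fix a base point $v_0 \in Q$ and a basis $v_1,\dots,v_k$ of the linear part $V := Q - v_0$, a $k$-dimensional subspace of $\F^\ell$, and let $\phi\colon \F^k \to Q$ be the affine isomorphism $\phi(\beta) = v_0 + \sum_{m=1}^{k}\beta_m v_m$. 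For $w \in Q^n$, write each $w_i = \phi(\beta_i)$ with $\beta_i \in \F^k$, and set $\Psi(w) = (a^\up{1},\dots,a^\up{k}) \in (\F^n)^k$, where $a^\up{m}_i$ is the $m$\textsuperscript{th} coordinate of $\beta_i$. Equivalently, $\Psi$ applies $\phi^{-1}$ in each of the $n$ repetitions to identify $Q^n$ with $(\F^k)^n$ and then transposes the resulting $n\times k$ array; its inverse sends $(a^\up{1},\dots,a^\up{k})$ to the tuple $w$ with $w_i = v_0 + \sum_m a^\up{m}_i v_m$.

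The core of the argument is the claim that a grid inside $\Psi(W)$ produces a forbidden subgraph inside $W$. Suppose $\set{x + \alpha\otimes d : \alpha \in \F^k} \subseteq \Psi(W)$ with $d \ne 0$. Pulling these points back through $\Psi^{-1}$ and using linearity of $\phi$, a short computation gives that $\Psi^{-1}(x + \alpha\otimes d)$ is the tuple whose $i$\textsuperscript{th} coordinate equals $\xi_i + d_i\, u(\alpha)$, where $\xi_i := \phi\brac{x^\up{1}_i,\dots,x^\up{k}_i} \in Q$ depends only on $x$ and $i$, and $u(\alpha) := \sum_m \alpha^\up{m} v_m$ runs bijectively over $V$ as $\alpha$ runs over $\F^k$. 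Reindexing by $u \in V$, the preimage of the grid is thus the collection of tuples $y(u) \in W$ with $y(u)_i = \xi_i + d_i u$. Now pick any coordinate $i^* \in [n]$ with $d_{i^*} \ne 0$, and verify the three conditions of Definition~\ref{defn:forbidden_subgraph} relative to $i^*$: (1) each $y(u)$ lies in $Q^n$ since $\xi_i \in Q$ and $d_i u \in V$, the tuples $y(u)$ are pairwise distinct because $y(u)_{i^*} = \xi_{i^*} + d_{i^*}u$ takes a distinct value for each $u$, and there are $\abs{V} = \abs{\F}^k = \abs{Q}$ of them; (2) $\set{y(u)_{i^*} : u \in V} = \xi_{i^*} + d_{i^*}V = \xi_{i^*} + V = Q$, because multiplication by $d_{i^*} \ne 0$ permutes $V$; (3) if $y(u_1)$ and $y(u_2)$ agree in player $j$'s $i^*$\textsuperscript{th} coordinate, then $d_{i^*}u_1^\up{j} = d_{i^*}u_2^\up{j}$ forces $u_1^\up{j} = u_2^\up{j}$, and hence $y(u_1)_i^\up{j} = \xi_i^\up{j} + d_i u_1^\up{j} = \xi_i^\up{j} + d_i u_2^\up{j} = y(u_2)_i^\up{j}$ for every $i \in [n]$, so the two edges agree on all of player $j$'s questions. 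Therefore $\set{y(u) : u \in V} \subseteq W$ is a forbidden subgraph, contradicting the choice of $W$, and the claim follows.

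Granting the claim, $\Psi(W)$ is grid-free whenever $W$ has no forbidden subgraph, so $r_{grid}(\F,k,n) \ge \abs{\Psi(W)} / \abs{\F}^{kn} = \abs{W}/\abs{Q}^n$; taking $W$ of maximum density completes the proof. I would also note that the argument is insensitive to how the product set $\mc X = \F^\ell$ is grouped into players: condition (3) is checked coordinate-by-coordinate over $[\ell]$ above, and the corresponding statement for any coarser grouping of coordinates into players is an immediate consequence. None of the steps is genuinely difficult; the only parts requiring real care are the index bookkeeping between the ``$n$ repetitions'' and ``$k$ grid directions'' organizations of the data and the verification of the consistency condition (3), so I expect those to be the main (and rather mild) obstacle.
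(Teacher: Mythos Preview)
Your proposal is correct and follows essentially the same approach as the paper: both fix an affine isomorphism between $\F^k$ and $Q$, use it to identify $Q^n$ with $(\F^n)^k$, and then verify that the preimage of a grid is a forbidden subgraph by checking the three conditions of Definition~\ref{defn:forbidden_subgraph} at a coordinate $i$ with $d_i\ne 0$. The paper phrases the map in the direction $\varphi:\F^k\to Q$ and defines $A=\{x:\varphi^n(x)\in W\}$, while you phrase it via $\Psi = (\varphi^n)^{-1}$ followed by a transpose, but the verification of condition~(3) via the $j$\textsuperscript{th}-coordinate projection is identical in substance.
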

\begin{proof}
	Let $\varphi:\F^k\to \F^{\ell}$ be an injective affine map such that $Q = \set{\varphi(x):x\in \F^k}$.
	
	Consider any $n\in \N$, and let $W\subseteq Q^{n}$ be such that $H_{\mc X^n,W}$ has no forbidden-subgraph.
	Let $A = \set{x: \varphi^n(x)\in W} \subseteq (\F^n)^k$; here $\varphi^n$ is the affine map acting independently on each of the $n$ coordinates, that is, $\varphi^n(y) = (\varphi(y_1),\dots,\varphi(y_n))$ for any $y\in (\F^k)^n$.	
	 It suffices to show that is a grid-free set, since $\frac{\abs{A}}{\abs{\F}^{kn}} = \frac{\abs{W}}{\abs{Q}^n}$.
	 
	 Suppose, for the sake of contradiction, that $A$ contains a grid.
	Then, there exists $x\in (\F^n)^k$ and $d\in \F^n, d\not=0$ such that $\varphi^n(x+\alpha\otimes d) \in W$ for each $\alpha\in \F^k$.
	This, by Definition~\ref{defn:forbidden_subgraph}, is a forbidden subgraph, choosing $i\in [n]$ to be any coordinate with $d_i\not=0$, as follows:
	\begin{enumerate}
		\item The $\abs{\F^k}$ edges are $\set{\varphi^n(x+\alpha\otimes d) : \alpha\in \F^k}$.
		\item As $d_i\not=0$, we have that  $\set{x_i+\alpha\cdot d_i: \alpha\in \F^k} = \F^k$. This gives us \[\set{\varphi^n(x+\alpha\otimes d)_i : \alpha\in \F^k}=\set{\varphi(x_i+\alpha\cdot d_i) : \alpha\in \F^k} = Q.\]
		\item Fix any $j\in [\ell]$, and let $\psi:\F^k\to \F$ be the linear map denoting the $j$\textsuperscript{th} coordinate of $\varphi$.
		Now, suppose $\alpha,\beta\in \F^k$ are such that $\psi^n(x+\alpha\otimes d)_i = \psi^n(x+\beta\otimes d)_i$.
		Then, we have $\psi(x_i)+\psi(\alpha)\cdot d_i-\psi(0) = \psi(x_i)+\psi(b)\cdot d_i-\psi(0)$, which implies $\psi(a) = \psi(\beta)$.
		Now, for each $i'\in [n]$, we have $\psi^n(x+a\otimes d)_{i'} =\psi(x_{i'})+\psi(\alpha)\cdot d_{i'}-\psi(0) = \psi(x_{i'})+\psi(b)\cdot d_{i'}-\psi(0)= \psi^n(x+b\otimes d)_{i'}$.
		\qedhere
	\end{enumerate}
\end{proof}

We show that appropriately chosen $Q$ can achieve equality in the above proposition:

\begin{proposition}
	Suppose that  $\abs{\F} = p^r$ for a prime $p$, and $r\in \N$.
	Then, for every integer $k\geq 2$, $\mc X = \F^{k+r}$, there exists a linear subspace $Q\subseteq \mc X$ of dimension $k$, such that for every $n\in \N$, $E_Q(n) = r_{grid}(\F,k,n)$.
\end{proposition}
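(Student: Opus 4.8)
My plan is to exhibit the subspace $Q$ explicitly and show that, for this choice, every forbidden subgraph of $H_{\mc X^n,Q^n}$ comes from a grid; combined with Proposition~\ref{prop:lin_game_grid_free} (which already gives $E_Q(n)\le r_{grid}(\F,k,n)$ for every $k$-dimensional affine $Q\subseteq\F^{k+r}$) this forces equality. Write $\F=\F_{p^r}$, let $\mu$ be a primitive element of $\F^{\times}$ (a generator of the multiplicative group), and choose $\F$-linear functionals $\psi_1,\dots,\psi_{k+r}\colon\F^k\to\F$ as follows: $\psi_1,\dots,\psi_k$ are the coordinate functionals $\psi_j(\alpha)=\alpha^{\up{j}}$; $\psi_{k+1}=\psi_1+\dots+\psi_k$; if $r\ge 2$ also $\psi_{k+2}=\psi_1+\mu\,\psi_2$; and any remaining functionals are taken equal to $\psi_1$. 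Set $\varphi=(\psi_1,\dots,\psi_{k+r})\colon\F^k\to\F^{k+r}$ and $Q=\varphi(\F^k)$. Since $\psi_1,\dots,\psi_k$ already span $(\F^k)^{*}$, the map $\varphi$ is injective and $\dim Q=k$, so Proposition~\ref{prop:lin_game_grid_free} applies and it remains to prove $E_Q(n)\ge r_{grid}(\F,k,n)$; by reversing the correspondence in the proof of Proposition~\ref{prop:lin_game_grid_free}, for this it suffices to show that whenever $H_{\mc X^n,W}$ contains a forbidden subgraph, the set $A=\{y:\varphi^n(y)\in W\}\subseteq(\F^n)^k$ contains a grid.

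So suppose we are given a forbidden subgraph: edges $e(1),\dots,e(q)\in W$ with $q=|Q|=|\F|^k$, and a special coordinate $i\in[n]$ as in Definition~\ref{defn:forbidden_subgraph}. Each $e(\cdot)$ equals $\varphi^n(a(\cdot))$ for a unique $a(\cdot)\in(\F^k)^n$, and by the second condition of Definition~\ref{defn:forbidden_subgraph} the map $e(\cdot)\mapsto e(\cdot)_i$ is a bijection onto $Q$, so after reindexing the edges by $\alpha\in\F^k$ we may assume $a(\alpha)_i=\alpha$. Define $A_{i'}\colon\F^k\to\F^k$ by $A_{i'}(\alpha)=a(\alpha)_{i'}$, so that $A_i=\mathrm{id}$. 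Unwinding the third condition of Definition~\ref{defn:forbidden_subgraph}: for every player $j\in[k+r]$ and every $i'\in[n]$, the value $\psi_j(A_{i'}(\alpha))$ depends only on $\psi_j(\alpha)$, i.e.\ $\psi_j(A_{i'}(\alpha))=\phi_{i',j}(\psi_j(\alpha))$ for some $\phi_{i',j}\colon\F\to\F$. The crux is to deduce from this, for our specific $\psi_j$'s, that each $A_{i'}$ has the form $\alpha\mapsto d_{i'}\alpha+x_{i'}$ with $d_{i'}\in\F$ and $x_{i'}\in\F^k$. Granting this, since $d_i=1$, the set $\{a(\alpha):\alpha\in\F^k\}$ is, under the natural identification $(\F^k)^n\cong(\F^n)^k$, exactly a grid $\{x+\alpha\otimes d:\alpha\in\F^k\}$ with $x=(x_{i'}^{\up{j}})_{i',j}\in(\F^n)^k$ and $d=(d_1,\dots,d_n)\ne 0$; since every $e(\alpha)=\varphi^n(a(\alpha))$ lies in $W$, this grid lies in $A$, as needed. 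Taking the densest grid-free $A$ then gives $E_Q(n)\ge |A|/|\F|^{kn}=r_{grid}(\F,k,n)$, and with Proposition~\ref{prop:lin_game_grid_free} we conclude $E_Q(n)=r_{grid}(\F,k,n)$.

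It remains to carry out the functional-equation deduction, for a fixed coordinate $i'$ (dropped from the notation). Players $1,\dots,k$ give $A(\alpha)^{\up{j}}=\phi_j(\alpha^{\up{j}})$, so $A$ acts coordinatewise by arbitrary functions $\phi_j\colon\F\to\F$. Player $k+1$ gives that $\sum_{j}\phi_j(\alpha^{\up{j}})$ depends only on $\sum_j\alpha^{\up{j}}$; since the $\alpha^{\up{j}}$ vary independently over $\F$, a routine Cauchy-equation argument forces $\phi_j(t)=g(t)+c_j$ for all $j$, with $c_j\in\F$ and a single \emph{additive} (hence $\F_p$-linear) map $g\colon\F\to\F$ independent of $j$. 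If $r=1$ then $\F=\F_p$ and $g$ is automatically $\F$-linear. If $r\ge 2$, player $k+2$ gives that $g(s)+\mu\,g(t)$ depends only on $s+\mu t$ as $s,t$ range over $\F$ (here $k\ge 2$ is used, to vary $\alpha^{\up{1}},\alpha^{\up{2}}$ freely); with the additivity of $g$ this forces $g(v)=\mu\,g(\mu^{-1}v)$ for all $v$, i.e.\ $g$ commutes with multiplication by $\mu$, hence with multiplication by every power of $\mu$, hence by all of $\F^{\times}$, so $g$ is $\F$-linear. In every case $g(t)=g(1)\,t$, so $A(\alpha)=g(1)\alpha+x$ with $x=(c_1,\dots,c_k)\in\F^k$; restoring the index, $A_{i'}(\alpha)=d_{i'}\alpha+x_{i'}$ with $d_{i'}=g_{i'}(1)$, and $d_i=1$ since $A_i=\mathrm{id}$.

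I expect the main obstacle to be exactly this last step: choosing the functionals so that the third condition of Definition~\ref{defn:forbidden_subgraph} is restrictive enough to pin $A_{i'}$ down to an ``affine scaling''. The two ideas that make it work are that a single diagonal functional $\psi_1+\dots+\psi_k$ already forces $\F_p$-linearity through a Cauchy equation, and that one further functional $\psi_1+\mu\psi_2$ with $\mu$ a generator of $\F$ over $\F_p$ upgrades $\F_p$-linearity to $\F$-linearity — which is precisely why $k+r$ players (rather than $k+1$) is the natural count, matching the statement of Theorem~\ref{thm:intro_grid}.
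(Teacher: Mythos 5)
Your proof is correct, and it uses a genuinely different construction of $Q$ than the paper. The paper picks a set $G \subseteq \F$ of size $r$ that generates the \emph{additive} group of $\F$, and its extra $r$ functionals are $\alpha \mapsto t\alpha^\up{1} + \alpha^\up{2} + \dots + \alpha^\up{k}$ for $t \in G$. After deriving additivity of $g$ (effectively from the $t=1$ functional, which one may assume lies in $G$), the paper then uses each remaining $t \in G$ to get $g(t) = t\,g(1)$, and writes an arbitrary $\omega$ as a sum of elements of $G$ to conclude $\F$-linearity. You instead take a single primitive element $\mu$ of the \emph{multiplicative} group $\F^\times$, get additivity from the diagonal functional $\psi_1+\dots+\psi_k$ via a Cauchy-type argument, and then one further functional $\psi_1 + \mu\psi_2$ forces $g(\mu w) = \mu\, g(w)$, which upgrades $\F_p$-linearity to full $\F$-linearity since $\mu$ generates $\F^\times$. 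Both routes give the identity $g(\omega) = \omega\,g(1)$; yours does so with only two ``essential'' extra functionals (one when $r=1$), padding the remaining $r-2$ players with duplicates of $\psi_1$. One side effect worth noting is that your construction actually shows that a $k$-dimensional linear $Q$ inside $\F^{k+2}$ (not just $\F^{k+r}$) already achieves $E_Q(n) = r_{grid}(\F,k,n)$ when $r\ge 2$; the paper's construction, by contrast, uses all $k+r$ coordinates non-trivially. (The padding also means your $F_{\mc X,Q}$ is not the complete $(k+r)$-partite graph, but it is still connected, so Corollary~\ref{corr:grid_free} still follows via Theorem~\ref{thm:game_val_lb_forbidden}.) Your remark at the end that ``$k+r$ is the natural count'' is slightly at odds with your own construction, which needs only $k+2$ — the count $k+r$ is natural for the paper's additive-basis approach, not the multiplicative one.
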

\begin{proof}
	Let $k\geq 2$ be an integer and $\mc X = \F^{k+r}$.
	Let $G \subseteq \F$ be a set of size $r$ that generates the additive group over $\F$.
	The set $Q\subseteq \mc X$ is defined as
	\[ Q = \set{(x,y): x\in \F^k, y = \brac{t\cdot x^\up{1}+x^\up{2}+\dots + x^\up{k}}_{t\in G}} .\]

	Consider any $n\in \N$.
	The inequality $E_Q(n) \leq  r_{grid}(\F,k,n)$ follows from Proposition~\ref{prop:lin_game_grid_free}.
	We show that $E_Q(n) \geq r_{grid}(\F,k,n)$.
	Let $A\subseteq (\F^n)^k$ be a grid-free set; it suffices to show that $W = \set{ (x,y): x\in A, y = (t\cdot x^\up{1}+x^\up{2}+x^\up{3}+\dots+ x^\up{k})_{t \in G} }  \subseteq Q^n$ is such that $H_{\mc X^n,W}$ has no forbidden subgraph.
	Suppose, for the sake of contradiction, that $H_{\mc X^n,W}$ contains a forbidden subgraph.
	
	By the second and third points in Definition~\ref{defn:forbidden_subgraph}, we can find inputs $\set{x^\up{j}(\omega) \in \F^n}_{j\in [k], \omega\in \F}$ and $\set{y^\up{t}(\omega) \in \F^n}_{t\in G, \omega\in \F}$ for the $k+r$ players, and $i\in [n]$, such that
	\begin{enumerate}
		\item $x^\up{j}(\omega)_i = \omega$ for each $j\in [k],\omega\in \F$ and $y^\up{t}(\omega)_i = \omega$ for each $t\in G,\omega\in \F$.
		\item For each $\alpha\in \F^k$, it holds that \[\brac{\brac{x^\up{j}(\alpha^\up{j})}_{j\in [k]}, \brac{y^\up{t}(t\cdot \alpha^\up{1}+\alpha^\up{2}+\dots + \alpha^\up{k})}_{t\in G}} \in W\subseteq Q^n.\]
		In particular, for each $\alpha,\beta\in \F^k$ and $t\in G$ satisfying $t\cdot \alpha^\up{1}+\sum_{j=2}^k\alpha^\up{j} = t\cdot \beta^\up{1}+\sum_{j=2}^k\beta^\up{j}$, it holds that $t\cdot x^\up{1}(\alpha^\up{1})+\sum_{j=2}^kx^\up{j}(\alpha^\up{j}) = t\cdot x^\up{1}(\beta^\up{1})+\sum_{j=2}^kx^\up{j}(\beta^\up{j})$, as each of these must equal $y^\up{t}\brac{t\cdot \alpha^\up{1}+\sum_{j=2}^k \alpha^\up{j}}=y^\up{t}\brac{t\cdot \beta^\up{1}+\sum_{j=2}^k \beta^\up{j}}$.
	\end{enumerate}
	Now, by the above, we get (since $k\geq 2$):
	\begin{enumerate}
		\item For each $j\in [k],\omega\in \F$, it holds that $\omega+0+\dots+0 = 0+\dots+0+\omega+0+\dots+0$ with $\omega$ in the $j$\textsuperscript{th} position, and hence $x^\up{1}(\omega)+x^\up{j}(0) = x^\up{1}(0)+x^\up{j}(\omega)$, and hence  $x^\up{j}(\omega)- x^\up{j}(0) = x^\up{1}(\omega)-x^\up{1}(0)$.
		\item For each $\omega,\omega'\in \F$, by a similar reasoning, we get that
		\begin{align*}
			x^\up{1}(\omega+\omega') - x^\up{1}(0) &= x^\up{1}(\omega+\omega') + x^\up{2}(0)-x^\up{1}(0)-x^\up{2}(0)
			\\&= x^\up{1}(\omega)+ x^\up{2}(\omega')-x^\up{1}(0)-x^\up{2}(0)
			\\&= \brac{x^\up{1}(\omega)-x^\up{1}(0)}+ \brac{x^\up{2}(\omega')-x^\up{2}(0)}
			\\&= \brac{x^\up{1}(\omega)-x^\up{1}(0)} + \brac{x^\up{1}(\omega')-x^\up{1}(0)}.
		\end{align*}
		
		\item For each $t\in G$, we have $t\cdot 1 +0+\dots+ 0 = t\cdot 0 + t + 0+\dots+0$, and hence $t\cdot x^\up{1}(1)+x^\up{2}(0) = t\cdot x^\up{1}(0)+x^\up{2}(t)$.
		Combining with the above, we get that for each $t\in G$,
		\[ x^\up{1}(t)-x^\up{1}(0) = x^\up{2}(t)-x^\up{2}(0) = t\cdot (x^\up{1}(1)-x^\up{1}(0)). \]
	\end{enumerate}
	Since $G$ generates the additive group over $\F$, the above gives us that for each $j\in [k],\omega\in \F$, after writing $\omega = t_1+\dots +t_s$ for $t_1,\dots, t_s\in G$,
	\begin{align*}
		x^\up{j}(\omega)- x^\up{j}(0) &= x^\up{1}(\omega)-x^\up{1}(0)
		\\&= \sum_{i\in [s]} \brac{x^\up{1}(t_i)-x^\up{1}(0)}
		\\&= \sum_{i\in [s]} t_i\cdot \brac{x^\up{1}(1)-x^\up{1}(0)}
		\\&= \omega\cdot \brac{x^\up{1}(1)-x^\up{1}(0)}.
	\end{align*}
	
	Hence, for $z = (x^\up{1}(0), \dots, x^\up{k}(0)) \in (\F^n)^k$ and $d = x^\up{1}(1)-x^\up{1}(0)\in \F^n$, we have that for each $\alpha\in \F^k$, $(x^\up{1}(\alpha^\up{1}), \dots, x^\up{k}(\alpha^{\up{k}})) = z + \alpha\otimes d \in A$.
	This contradicts that $A$ is grid-free.
\end{proof}

For the set $Q$ defined in the proof of the above proposition, it is easily checked that the graph $F_{\mc X,Q}$ (see Definition~\ref{defn:forbidden_subgraph}) is the complete $(k+r)$-partite graph, and hence connected.
Then, by Theorem~\ref{thm:game_val_lb_forbidden}, we get the following corollary:

\begin{corollary}\label{corr:grid_free}
	Suppose that  $\abs{\F} = p^r$ for a prime $p$, and $r\in \N$.
	Then, for every integer $k\geq 2$, $\mc X = \F^{k+r}$, there exists a linear subspace $Q\subseteq \mc X$ of dimension $k$, such that for every $n\in \N$, there exists a $(k+r)$-player game $\mc G = (\mc X, \mc A, \mu, Q, V)$, with $\mu$ the uniform distribution on $Q$, and such that $\val(\mc G^{n}) = r_{grid}(\F,k, n)$.
\end{corollary}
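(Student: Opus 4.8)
The plan is to obtain this corollary by combining the two preceding propositions with Theorem~\ref{thm:game_val_lb_forbidden}. The proposition immediately above already exhibits, for $\abs{\F} = p^r$ and any $k \geq 2$, an explicit linear subspace $Q \subseteq \mc X = \F^{k+r}$ of dimension $k$ (namely $Q = \set{(x,y) : x \in \F^k,\ y = (t\cdot x^{(1)} + x^{(2)} + \dots + x^{(k)})_{t \in G}}$ for a generating set $G$ of the additive group of $\F$) for which $E_Q(n) = r_{grid}(\F, k, n)$ for every $n \in \N$. So all that remains is to turn this statement about the forbidden-subgraph density $E_Q(n)$ into a statement about the parallel-repeated value of an explicit $(k+r)$-player game, and for that I would invoke Theorem~\ref{thm:game_val_lb_forbidden}.

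First I would check the hypothesis of that theorem, that $F_{\mc X, Q}$ is connected; in fact I claim it is the complete $(k+r)$-partite graph. Concretely: for any two of the $k+r$ coordinate-players and any prescribed pair of field values, some point of $Q$ realizes those two values. This is immediate if both players lie among the $k$ free coordinates $x^{(1)}, \dots, x^{(k)}$; if one or both lie among the $r$ coordinates $y^{(t)} = t\cdot x^{(1)} + x^{(2)} + \dots + x^{(k)}$, then since $k \geq 2$ there is at least one free coordinate to spare, and since distinct $t, t' \in G$ give distinct coefficients of $x^{(1)}$, the relevant one- or two-equation linear system can always be solved. This is a short finite check.

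With connectivity established, Theorem~\ref{thm:game_val_lb_forbidden} applied to this $\mc X$ and $Q$ yields, for each $n \in \N$, a game $\mc G = (\mc X, \mc A, \mu, Q, V)$ with $\mu$ uniform on $Q$ and $\val(\mc G^n) = E_Q(n)$; moreover the number of players equals the number of parts of $F_{\mc X, Q}$, i.e.\ $k+r$. Combining $\val(\mc G^n) = E_Q(n)$ with $E_Q(n) = r_{grid}(\F, k, n)$ from the preceding proposition gives $\val(\mc G^n) = r_{grid}(\F, k, n)$, which is exactly the claim.

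I expect no real obstacle here: this corollary is essentially bookkeeping, since the substantive content lives in Proposition~\ref{prop:lin_game_grid_free} (the inequality $E_Q(n) \leq r_{grid}(\F,k,n)$ for every affine $Q$ of dimension $k$) and in the reverse inequality established for the special $Q$ above. The only mildly fiddly point is the connectivity of $F_{\mc X, Q}$. If one wants a single game working for all $n$ at once, one replaces the use of Theorem~\ref{thm:game_val_lb_forbidden} by Remark~\ref{remk:inf_length} and allows a countably infinite answer alphabet.
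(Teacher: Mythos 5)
Your proposal matches the paper's argument exactly: take the $Q$ from the preceding proposition (for which $E_Q(n)=r_{grid}(\F,k,n)$), verify that $F_{\mc X,Q}$ is the complete $(k+r)$-partite graph and hence connected, and apply Theorem~\ref{thm:game_val_lb_forbidden} (or Remark~\ref{remk:inf_length} for a single game over all $n$). Your connectivity check is a slightly more detailed version of what the paper asserts in one line, and the rest is the same bookkeeping.
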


\begin{remark}
	We note that the above proposition cannot be true for $k=1$:
	
	It is easily shown that for any $\ell\in \N$, and affine vector space $Q\subseteq \F^\ell$ of dimension 1, that $E_Q(n) \leq \brac{1-\frac{1}{\abs{\F}}}^n$; roughly, this holds because any set $W\subseteq Q^n$ of size more than this has full projection on some coordinate $i\in [n]$, giving a forbidden subgraph.
	
	On the other hand, for example for $\F = \F_3$, any grid-free set in $\F_3^n$ (with $k=1$) is what is called a cap-set, and such sets of size at least $2.2^n \gg 2^n$ are known to exist~\cite{Edel04}.
	\end{remark}


\section*{Acknowledgements}
We thank Ran Raz for many helpful discussions. Ran politely declined to be a co-author.

\bibliographystyle{alpha}
\bibliography{main.bib}

\newcommand{\etalchar}[1]{$^{#1}$}
\begin{thebibliography}{BOGKW88}

\bibitem[ABSS97]{ABSS97}
Sanjeev Arora, L\'{a}szl\'{o} Babai, Jacques Stern, and Z.~Sweedyk.
\newblock The hardness of approximate optima in lattices, codes, and systems of
  linear equations.
\newblock {\em J. Comput. System Sci.}, 54(2, part 2):317--331, 1997.
\newblock (also in FOCS 1993).

\bibitem[AK09]{AK09}
Noga Alon and Bo'az Klartag.
\newblock Economical toric spines via {C}heeger's inequality.
\newblock {\em J. Topol. Anal.}, 1(2):101--111, 2009.

\bibitem[ALM{\etalchar{+}}98]{ALMSS98}
Sanjeev Arora, Carsten Lund, Rajeev Motwani, Madhu Sudan, and Mario Szegedy.
\newblock Proof verification and the hardness of approximation problems.
\newblock {\em J. ACM}, 45(3):501--555, 1998.

\bibitem[AS74]{AS74}
M.~Ajtai and E.~Szemer\'{e}di.
\newblock Sets of lattice points that form no squares.
\newblock {\em Studia Sci. Math. Hungar.}, 9:9--11 (1975), 1974.

\bibitem[AS98]{AS98}
Sanjeev Arora and Shmuel Safra.
\newblock Probabilistic checking of proofs: a new characterization of {NP}.
\newblock {\em J. ACM}, 45(1):70--122, 1998.

\bibitem[BBCR13]{BBCR13}
Boaz Barak, Mark Braverman, Xi~Chen, and Anup Rao.
\newblock How to compress interactive communication.
\newblock {\em SIAM J. Comput.}, 42(3):1327--1363, 2013.
\newblock (also in STOC 2010).

\bibitem[BBK{\etalchar{+}}24]{BBKLM24}
Amey Bhangale, Mark Braverman, Subhash Khot, Yang~P. Liu, and Dor Minzer.
\newblock Parallel repetition of {$k$}-player projection games.
\newblock In {\em APPROX-RANDOM}, pages Art. No. 54, 16, 2024.

\bibitem[BBK{\etalchar{+}}25]{BBKLM25}
Amey Bhangale, Mark Braverman, Subhash Khot, Yang~P. Liu, and Dor Minzer.
\newblock Parallel repetition for 3-player {XOR} games.
\newblock In {\em STOC}, 2025.
\newblock (To appear).

\bibitem[Beh46]{Beh46}
F.~A. Behrend.
\newblock On sets of integers which contain no three terms in arithmetical
  progression.
\newblock {\em Proc. Nat. Acad. Sci. U.S.A.}, 32:331--332, 1946.

\bibitem[BG15]{BG15}
Mark Braverman and Ankit Garg.
\newblock Small value parallel repetition for general games.
\newblock In {\em STOC}, pages 335--340, 2015.

\bibitem[BGS98]{BGS98}
Mihir Bellare, Oded Goldreich, and Madhu Sudan.
\newblock Free bits, {PCP}s, and nonapproximability---towards tight results.
\newblock {\em SIAM J. Comput.}, 27(3):804--915, 1998.
\newblock (also in FOCS 1995).

\bibitem[BK12]{BK12}
Michael Bateman and Nets~Hawk Katz.
\newblock New bounds on cap sets.
\newblock {\em J. Amer. Math. Soc.}, 25(2):585--613, 2012.

\bibitem[BKLM25]{BKLM25}
Amey Bhangale, Subhash Khot, Yang~P. Liu, and Dor Minzer.
\newblock Reasonable bounds for combinatorial lines of length three.
\newblock In {\em FOCS}, 2025.
\newblock (Merged version, to appear) \url{https://arxiv.org/abs/2411.15137}.

\bibitem[BKM23]{BKM23}
Mark Braverman, Subhash Khot, and Dor Minzer.
\newblock Parallel repetition for the {GHZ} game: exponential decay.
\newblock In {\em FOCS}, pages 1337--1341, 2023.

\bibitem[Blo16]{Blo16}
T.~F. Bloom.
\newblock A quantitative improvement for {R}oth's theorem on arithmetic
  progressions.
\newblock {\em J. Lond. Math. Soc. (2)}, 93(3):643--663, 2016.

\bibitem[BM21]{BM21}
Mark Braverman and Dor Minzer.
\newblock Optimal tiling of the {E}uclidean space using permutation-symmetric
  bodies.
\newblock In {\em CCC}, pages Art. No. 5, 48, 2021.

\bibitem[BOGKW88]{BOGKW88}
Michael Ben-Or, Shafi Goldwasser, Joe Kilian, and Avi Wigderson.
\newblock Multi-prover interactive proofs: How to remove intractability
  assumptions.
\newblock In {\em STOC}, pages 113--131, 1988.

\bibitem[Bou99]{Bou99}
J.~Bourgain.
\newblock On triples in arithmetic progression.
\newblock {\em Geom. Funct. Anal.}, 9(5):968--984, 1999.

\bibitem[Bou08]{Bou08}
Jean Bourgain.
\newblock Roth's theorem on progressions revisited.
\newblock {\em J. Anal. Math.}, 104:155--192, 2008.

\bibitem[BRR{\etalchar{+}}09]{BRRRS09}
Boaz Barak, Anup Rao, Ran Raz, Ricky Rosen, and Ronen Shaltiel.
\newblock Strong parallel repetition theorem for free projection games.
\newblock In {\em APPROX-RANDOM}, pages 352--365, 2009.

\bibitem[BRWY13]{BRWY13}
Mark Braverman, Anup Rao, Omri Weinstein, and Amir Yehudayoff.
\newblock Direct products in communication complexity.
\newblock In {\em FOCS}, pages 746--755, 2013.

\bibitem[BS21]{BS21}
T.~F. Bloom and Olof Sisask.
\newblock Breaking the logarithmic barrier in roth's theorem on arithmetic
  progressions.
\newblock {\em Preprint}, 2021.
\newblock \url{https://arxiv.org/abs/2007.03528}.

\bibitem[CCL92]{CCL92}
Jin-Yi Cai, Anne Condon, and Richard~J. Lipton.
\newblock On games of incomplete information.
\newblock {\em Theoret. Comput. Sci.}, 103(1):25--38, 1992.
\newblock (also in STACS 1990).

\bibitem[CHTW04]{CHTW04}
Richard Cleve, Peter H{\o}yer, Benjamin Toner, and John Watrous.
\newblock Consequences and limits of nonlocal strategies.
\newblock In {\em CCC}, pages 236--249, 2004.

\bibitem[CLP17]{CLP17}
Ernie Croot, Vsevolod~F. Lev, and P\'{e}ter~P\'{a}l Pach.
\newblock Progression-free sets in {$\mathbb{Z}^n_4$} are exponentially small.
\newblock {\em Ann. of Math. (2)}, 185(1):331--337, 2017.

\bibitem[DGKR05]{DGKR05}
Irit Dinur, Venkatesan Guruswami, Subhash Khot, and Oded Regev.
\newblock A new multilayered {PCP} and the hardness of hypergraph vertex cover.
\newblock {\em SIAM J. Comput.}, 34(5):1129--1146, 2005.
\newblock (also in STOC 2003).

\bibitem[DHVY17]{DHVY17}
Irit Dinur, Prahladh Harsha, Rakesh Venkat, and Henry Yuen.
\newblock Multiplayer parallel repetition for expanding games.
\newblock In {\em ITCS}, volume~67 of {\em LIPIcs}, pages Art. No. 37, 16,
  2017.

\bibitem[DRS05]{DRS05}
Irit Dinur, Oded Regev, and Clifford Smyth.
\newblock The hardness of 3-uniform hypergraph coloring.
\newblock {\em Combinatorica}, 25(5):519--535, 2005.

\bibitem[DS14]{DS14}
Irit Dinur and David Steurer.
\newblock Analytical approach to parallel repetition.
\newblock In {\em STOC}, pages 624--633, 2014.

\bibitem[Ede04]{Edel04}
Yves Edel.
\newblock Extensions of generalized product caps.
\newblock {\em Des. Codes Cryptogr.}, 31:5--14, 2004.

\bibitem[EG17]{EG17}
Jordan~S. Ellenberg and Dion Gijswijt.
\newblock On large subsets of {$\mathbb{F}^n_q$} with no three-term arithmetic
  progression.
\newblock {\em Ann. of Math. (2)}, 185(1):339--343, 2017.

\bibitem[Fei91]{Fei91}
Uriel Feige.
\newblock On the success probability of the two provers in one-round proof
  systems.
\newblock In {\em Structure in Complexity Theory Conference}, pages 116--123,
  1991.

\bibitem[Fei95]{Fei95}
Uriel Feige.
\newblock Test your telepathic skills, 1995.
\newblock \url{https://www.wisdom.weizmann.ac.il/~feige/tachman.html}.

\bibitem[Fei98]{Fei98}
Uriel Feige.
\newblock A threshold of {$\ln n$} for approximating set cover.
\newblock {\em J. ACM}, 45(4):634--652, 1998.
\newblock (also in STOC 1996).

\bibitem[FGL{\etalchar{+}}96]{FGLSS96}
Uriel Feige, Shafi Goldwasser, Laszlo Lov\'{a}sz, Shmuel Safra, and Mario
  Szegedy.
\newblock Interactive proofs and the hardness of approximating cliques.
\newblock {\em J. ACM}, 43(2):268--292, 1996.

\bibitem[FK91]{FK91}
H.~Furstenberg and Y.~Katznelson.
\newblock A density version of the {H}ales-{J}ewett theorem.
\newblock {\em J. Anal. Math.}, 57:64--119, 1991.

\bibitem[FKO07]{FKO07}
Uriel Feige, Guy Kindler, and Ryan O'Donnell.
\newblock Understanding parallel repetition requires understanding foams.
\newblock In {\em CCC}, pages 179--192, 2007.

\bibitem[For89]{For89}
Lance Fortnow.
\newblock {\em Complexity theoretic aspects of interactive proof systems}.
\newblock PhD thesis, MIT, 1989.

\bibitem[FRS94]{FRS94}
Lance Fortnow, John Rompel, and Michael Sipser.
\newblock On the power of multi-prover interactive protocols.
\newblock {\em Theoret. Comput. Sci.}, 134(2):545--557, 1994.

\bibitem[FV02]{FV02}
Uriel Feige and Oleg Verbitsky.
\newblock Error reduction by parallel repetition -- a negative result.
\newblock {\em Combinatorica}, 22(4):461--478, 2002.

\bibitem[GHM{\etalchar{+}}21]{GHMRZ21}
Uma Girish, Justin Holmgren, Kunal Mittal, Ran Raz, and Wei Zhan.
\newblock Parallel repetition for the {GHZ} game: {A} simpler proof.
\newblock In {\em APPROX-RANDOM}, pages 62:1--62:19, 2021.

\bibitem[GHM{\etalchar{+}}22]{GHMRZ22}
Uma Girish, Justin Holmgren, Kunal Mittal, Ran Raz, and Wei Zhan.
\newblock Parallel repetition for all 3-player games over binary alphabet.
\newblock In {\em STOC}, pages 998--1009, 2022.

\bibitem[GHS02]{GHS02}
Venkatesan Guruswami, Johan H{\aa}stad, and Madhu Sudan.
\newblock Hardness of approximate hypergraph coloring.
\newblock {\em SIAM J. Comput.}, 31(6):1663--1686, 2002.
\newblock (also in FOCS 2000).

\bibitem[GHZ89]{GHZ89}
Daniel~M. Greenberger, Michael~A. Horne, and Anton Zeilinger.
\newblock Going beyond bell's theorem.
\newblock In {\em Bell\textquoteright{}s Theorem, Quantum Theory and
  Conceptions of the Universe}, pages 69--72. Springer Netherlands, 1989.

\bibitem[GMRZ22]{GMRZ22}
Uma Girish, Kunal Mittal, Ran Raz, and Wei Zhan.
\newblock Polynomial bounds on parallel repetition for all 3-player games with
  binary inputs.
\newblock In {\em APPROX-RANDOM}, pages 6:1--6:17, 2022.

\bibitem[Gow98]{Gow98}
W.~T. Gowers.
\newblock A new proof of {S}zemer\'{e}di's theorem for arithmetic progressions
  of length four.
\newblock {\em Geom. Funct. Anal.}, 8(3):529--551, 1998.

\bibitem[Gow01]{Gow01}
W.~T. Gowers.
\newblock A new proof of {S}zemer\'{e}di's theorem.
\newblock {\em Geom. Funct. Anal.}, 11(3):465--588, 2001.

\bibitem[H{\aa}s01]{Has01}
Johan H{\aa}stad.
\newblock Some optimal inapproximability results.
\newblock {\em J. ACM}, 48(4):798--859, 2001.
\newblock (also in STOC 1997).

\bibitem[HB87]{HB87}
D.~R. Heath-Brown.
\newblock Integer sets containing no arithmetic progressions.
\newblock {\em J. London Math. Soc. (2)}, 35(3):385--394, 1987.

\bibitem[HHR16]{HHR16}
Jan H{\k a}z{\l}a, Thomas Holenstein, and Anup Rao.
\newblock Forbidden subgraph bounds for parallel repetition and the density
  hales-jewett theorem.
\newblock {\em CoRR}, abs/1604.05757, 2016.
\newblock \url{http://arxiv.org/abs/1604.05757}.

\bibitem[HJ63]{HJ63}
A.~W. Hales and R.~I. Jewett.
\newblock Regularity and positional games.
\newblock {\em Trans. Amer. Math. Soc.}, 106:222--229, 1963.

\bibitem[Hol09]{Hol09}
Thomas Holenstein.
\newblock Parallel repetition: simplifications and the no-signaling case.
\newblock {\em Theory Comput.}, 5:141--172, 2009.
\newblock (also in STOC 2007).

\bibitem[HR20]{HR20}
Justin Holmgren and Ran Raz.
\newblock A parallel repetition theorem for the {GHZ} game.
\newblock {\em CoRR}, abs/2008.05059, 2020.
\newblock \url{https://arxiv.org/abs/2008.05059}.

\bibitem[HY19]{HY19}
Justin Holmgren and Lisa Yang.
\newblock The parallel repetition of non-signaling games: counterexamples and
  dichotomy.
\newblock In {\em STOC}, pages 185--192, 2019.

\bibitem[Kho02a]{Kho02a}
Subhash Khot.
\newblock Hardness results for approximate hypergraph coloring.
\newblock In {\em STOC}, pages 351--359, 2002.

\bibitem[Kho02b]{Kho02b}
Subhash Khot.
\newblock Hardness results for coloring 3-colorable 3-uniform hypergraphs.
\newblock In {\em FOCS}, pages 23--32, 2002.

\bibitem[KM23]{KM23}
Zander Kelley and Raghu Meka.
\newblock Strong bounds for 3-progressions.
\newblock In {\em FOCS}, pages 933--973, 2023.

\bibitem[KORW08]{KORW08}
Guy Kindler, Ryan O'Donnell, Anup Rao, and Avi Wigderson.
\newblock Spherical cubes and rounding in high dimensions.
\newblock In {\em FOCS}, pages 189--198, 2008.

\bibitem[Kow24]{Kow24}
E.~Kowalski.
\newblock Introduction to additive combinatorics.
\newblock {\em ETH Lecture Notes}, 2024.
\newblock
  \url{https://people.math.ethz.ch/~kowalski/additive-combinatorics.pdf}.

\bibitem[LM07]{LM07}
Michael~T. Lacey and William McClain.
\newblock On an argument of {S}hkredov on two-dimensional corners.
\newblock {\em Online J. Anal. Comb.}, pages Art. 2, 21, 2007.

\bibitem[LSS24]{LSS24}
James Leng, Ashwin Sah, and Mehtaab Sawhney.
\newblock Improved bounds for {S}zemer\'{e}di's theorem.
\newblock {\em Preprint}, 2024.
\newblock \url{https://arxiv.org/abs/2402.17995}.

\bibitem[Mit25]{Mit25}
Kunal Mittal.
\newblock {\em Parallel Repetition of Multiplayer Games: New Bounds and
  Applications}.
\newblock PhD thesis, Princeton University, 2025.

\bibitem[MR21]{MR21}
Kunal Mittal and Ran Raz.
\newblock Block rigidity: strong multiplayer parallel repetition implies
  super-linear lower bounds for {T}uring machines.
\newblock In {\em ITCS}, pages Art. No. 71, 15, 2021.

\bibitem[Pel24]{Pel24}
Sarah Peluse.
\newblock Finite field models in arithmetic combinatorics---twenty years on.
\newblock In {\em Surveys in combinatorics 2024}, volume 493 of {\em London
  Math. Soc. Lecture Note Ser.}, pages 159--199. Cambridge Univ. Press,
  Cambridge, 2024.

\bibitem[Pol12]{Pol12}
D.~H.~J. Polymath.
\newblock A new proof of the density {H}ales-{J}ewett theorem.
\newblock {\em Ann. of Math. (2)}, 175(3):1283--1327, 2012.

\bibitem[PRW97]{PRW97}
Itzhak Parnafes, Ran Raz, and Avi Wigderson.
\newblock Direct product results and the {GCD} problem, in old and new
  communication models.
\newblock In {\em STOC}, pages 363--372, 1997.

\bibitem[Rao11]{Rao11}
Anup Rao.
\newblock Parallel repetition in projection games and a concentration bound.
\newblock {\em SIAM J. Comput.}, 40(6):1871--1891, 2011.
\newblock (also in STOC 2008).

\bibitem[Raz98]{Raz98}
Ran Raz.
\newblock A parallel repetition theorem.
\newblock {\em SIAM J. Comput.}, 27(3):763--803, 1998.
\newblock (also in STOC 1995).

\bibitem[Raz10]{Raz10}
Ran Raz.
\newblock Parallel repetition of two prover games.
\newblock In {\em CCC}, pages 3--6, 2010.

\bibitem[Raz11]{Raz11}
Ran Raz.
\newblock A counterexample to strong parallel repetition.
\newblock {\em SIAM J. Comput.}, 40(3):771--777, 2011.
\newblock (also in FOCS 2008).

\bibitem[Rot53]{Roth53}
K.~F. Roth.
\newblock On certain sets of integers.
\newblock {\em J. London Math. Soc.}, 28:104--109, 1953.

\bibitem[RR12]{RR12}
Ran Raz and Ricky Rosen.
\newblock A strong parallel repetition theorem for projection games on
  expanders.
\newblock In {\em CCC}, pages 247--257, 2012.

\bibitem[San11]{San11}
Tom Sanders.
\newblock On {R}oth's theorem on progressions.
\newblock {\em Ann. of Math. (2)}, 174(1):619--636, 2011.

\bibitem[San12]{San12}
Tom Sanders.
\newblock On certain other sets of integers.
\newblock {\em J. Anal. Math.}, 116:53--82, 2012.

\bibitem[Sch21]{Sch21}
Tomasz Schoen.
\newblock Improved bound in {R}oth's theorem on arithmetic progressions.
\newblock {\em Adv. Math.}, 386:Paper No. 107801, 20, 2021.

\bibitem[Shk06]{Shk06}
I.~D. Shkredov.
\newblock On a generalization of {S}zemer\'{e}di's theorem.
\newblock {\em Proc. London Math. Soc. (3)}, 93(3):723--760, 2006.

\bibitem[Sze75]{Sze75}
E.~Szemer\'{e}di.
\newblock On sets of integers containing no {$k$}elements in arithmetic
  progression.
\newblock {\em Acta Arith.}, 27:199--245, 1975.

\bibitem[Sze90]{Sze90}
E.~Szemer\'{e}di.
\newblock Integer sets containing no arithmetic progressions.
\newblock {\em Acta Math. Hungar.}, 56(1-2):155--158, 1990.

\bibitem[vdW27]{vdWar27}
B.~L. van~der Waerden.
\newblock Beweis einer {Baudetschen} {Vermutung}.
\newblock {\em Nieuw Arch. Wiskd., II. Ser.}, 15:212--216, 1927.
\newblock
  \url{https://www.delpher.nl/nl/tijdschriften/view?coll=dts&identifier=MMKWG01:022157001:00228}.

\bibitem[Ver96]{Ver96}
Oleg Verbitsky.
\newblock Towards the parallel repetition conjecture.
\newblock {\em Theoret. Comput. Sci.}, 157(2):277--282, 1996.

\end{thebibliography}

\end{document}